\definecolor{darkblue}{RGB}{0,0,127} 
\definecolor{darkgreen}{RGB}{0,150,0}
\newtheorem{theorem}{Theorem}
\newtheorem{lemma}{Lemma}
\def\@opargbegintheorem#1#2#3{\trivlist
   \item[]{\bfseries #1\ #2\ (#3)} \itshape}
\newcommand{\Z}{\mathbb{Z}}
\def\Z{\mathbb{Z}}
\def\C{\mathcal{C}}
\newcommand{\beq}{\begin{equation}}
\newcommand{\eeq}{\end{equation}}
\newcommand{\bc}{\begin{center}}
\newcommand{\ec}{\end{center}}
\newcommand{\red}[1]{\textcolor{red}{#1}}
\newcommand{\blue}[1]{\textcolor{blue}{#1}}
\newcommand{\green}[1]{\textcolor{Green}{#1}}
\newcommand{\rd}{\textcolor{red}{r}}
\newcommand{\bl}{\textcolor{blue}{b}}
\newcommand{\gr}{\textcolor{Green}{g}}
\newcommand{\yl}{\textcolor{Dandelion}{y}}
\newcommand{\change}[1]{\textcolor{black}{#1}}
\begin{document}

\title{Clifford Hierarchy Stabilizer Codes: Transversal Non-Clifford Gates and Magic}

\author{Ryohei Kobayashi}
\email{ryok@ias.edu} 
\affiliation{School of Natural Sciences, Institute for Advanced Study, Princeton, NJ 08540, USA}

\author{Guanyu Zhu}
\email{guanyu.zhu@ibm.com}
\affiliation{IBM Quantum, IBM T.J. Watson Research Center, Yorktown Heights, NY 10598 USA}
\affiliation{Kavli Institute of Theoretical Physics, Santa Barbara, CA 93106, USA}

\author{Po-Shen Hsin}
\email{po-shen.hsin@kcl.ac.uk}

\affiliation{Department of Mathematics, King’s College London, Strand, London WC2R 2LS, UK}

\begin{abstract}
A fundamental problem in fault-tolerant quantum computation is the tradeoff between universality and dimensionality, exemplified by the the Bravyi-K\"onig bound for $n$-dimensional topological stabilizer codes.  In this work, we extend topological Pauli stabilizer codes to a broad class of $n$-dimensional Clifford hierarchy stabilizer codes. These codes correspond to the $(n+1)$D Dijkgraaf-Witten gauge theories with non-Abelian topological order.  We construct transversal non-Clifford gates through automorphism symmetries represented by cup products. In 2D, 
we obtain the first transversal non-Clifford logical gates including T and CS for Clifford stabilizer codes, using the automorphism of the twisted $\mathbb{Z}_2^3$ gauge theory (equivalent to $\mathbb{D}_4$ topological order). We also combine it with the just-in-time decoder to fault-tolerantly prepare the logical T magic state in $O(d)$ rounds via code switching.   In 3D, we construct a transversal logical $\sqrt{\text{T}}$ gate in a non-Clifford stabilizer code at the third level of the Clifford hierarchy, located on a tetrahedron corresponding to a  twisted  $\mathbb{Z}_2^4$ gauge theory.  
{Our constructions surpass the Bravyi-K\"onig bound by achieving the logical gates in the $(n+1)$-th level of Clifford hierarchy in $n$ spatial dimension.} 
\medskip
\noindent
\end{abstract}

% \bigskip \bigskip \bigskip
 
 \date{\today}

% \bigskip

% \eject

\maketitle

% \tableofcontents

\unitlength = .8mm

\setcounter{tocdepth}{3}

\bigskip

\textit{Introduction.}  The space-time overhead for achieving universality remain a central challenge in fault-tolerant quantum computation \cite{Kubica:2021}. The Bravyi-König theorem establishes that transversal logical gates--more generally, constant-depth circuits--in $n$-dimensional topological Pauli stabilizer codes are restricted in the $n^\text{th}$-level of Clifford hierarchy \cite{Bravyi:2013dx}.  For instance, realizing a non-Clifford gate at the third level necessitates at least a three-dimensional code, incurring $O(d^3)$ space or space-time overhead (with $d$ the code distance). An alternative approach using the ``just-in-time'' decoder \cite{bombin20182d, Browneaay4929} enables computation on a two-dimensional layout by emulating the 3D code within a (2+1)D spacetime framework, maintaining the same space-time overhead while reducing the spatial overhead to $O(d^2)$.  A fundamental question is general principles for lowering space or space-time overhead for logical non-Clifford gates. 

In recent years, symmetry has emerged as a powerful organizing principle for constructing transversal logical gates \cite{Yoshida_gate_SPT_2015, 
 Yoshida_global_symmetry_2016, Yoshida2017387, Zhu:2017tr, zhu:2022fractal, Barkeshli:2022wuz,Barkeshli:2022edm,zhu2023non,Barkeshli:2023bta,Kobayashi2024crosscap,Hsin2024_non-Abelian,Hsin2024:classifying, zhu2025topological, zhu2025transversal}. Higher-form symmetries provide addressable logical gates \cite{zhu2023non, Hsin2024:classifying}, while higher-group symmetries organize Clifford hierarchy of logical gates \cite{Barkeshli:2022wuz, Barkeshli:2022edm}. Non-invertible symmetry is important in gauging and measurements in lattice surgery \cite{Horsman_2012, williamson2024low,huang2025hybridlatticesurgerynonclifford} and code switching  \cite{Huang:2025cvt,Davydova:2025ylx,bauer2025planarfaulttolerantcircuitsnonclifford}. A large class of such gates are realized by constant-depth circuits and systematically classified in \cite{Hsin2024:classifying}, extending beyond the conventional color-code \cite{Kubica:2015mta}  or triorthogonality paradigm \cite{bravyi2012magic}.

In this work, we introduce a family of Clifford hierarchy stabilizer codes in $n$ spatial dimensions generalizing topological Pauli stabilizer codes, whose stabilizers lie in the $n^\text{th}$ level of Clifford hierarchy $\C_n$,  defined recursively on $m$-qubit \change{unitary} $U$, i.e., $\C_n:=\{U \in \mathbb{U}_m: \change{U \mathcal{P}_m U^\dagger} \subset \C_{n-1} \}$, where $\C_1=\mathcal{P}_m$ is the Pauli group and $\C_2$ is the Clifford group~\cite{Gottesman1999hierarchy}. 
\change{This generalizes XS, XP stabilizer codes using phase gates \cite{Ni2015XS, Webster2022XP}.}
We construct transversal logical T, CS gates in a 2D Clifford stabilizer code via the automorphism symmetry in the corresponding twisted $\mathbb{Z}_2^3$ gauge theory in (2+1)D. Such a code corresponds to a non-Abelian $\mathbb{D}_4$ topological order, realized on the ion-trap platform \cite{Iqbal:2023wvm}.  This work therefore represents a conceptual advance, providing the first demonstration of transversal non-Clifford gates in 2D Clifford stabilizer codes (see also the parallel work in \cite{Tyler2025groupsurfacecode}).

We then show how this transversal logical T gate can be deployed in a fault-tolerant protocol:
we combine the transversal T gate with code switching between a folded surface code and the non-Abelian code
via gauging measurements/condensation, together with the ``just-in-time'' decoder of Ref.~\cite{Davydova:2025ylx},
to prepare a logical $T$ magic state in $O(d)$ rounds.
Related works~\cite{bauer2025low,Davydova:2025ylx} interpret the spacetime protocols of
Refs.~\cite{bombin20182d,Browneaay4929} as a spacetime (2+1)D path integral in the same twisted $\mathbb{Z}_2^3$ gauge theory,
yielding a non-Clifford logical operation
(equivalently~\cite{magic_patent}).
\change{Our key conceptual advance is that we explicitly construct an invertible, finite-depth circuit
 acting entirely within the non-Abelian code---a concrete automorphism (invertible symmetry of a code)
that preserves the code space and realizes the transversal non-Clifford logical action directly,
rather than an effective spacetime implementation.}

This approach generalizes to a transversal logical $R_{N}=\text{diag}(1, e^{i 2\pi/2^N})$ gate in the $N^\text{th}$ level of Clifford hierarchy $\C_N$ in $N-1$ spatial dimensions, including the $\sqrt{\text{T}}$ gate at the fourth level $\mathcal{C}_4$ in 3D.
{Our construction surpasses the Bravyi-K\"onig bound for Pauli codes by one dimension.}

\change{
Interestingly, in spatial dimensions $n\ge 3$ we expect these non-Abelian codes to admit single-shot code switching with surface codes, driven by the confinement of magnetic flux loops (membranes) \cite{Bombin:2015hia}, analogous to 3D surface codes. More precisely, confinement suggests single-shot correction for non-Abelian fluxes, while a complete protocol must also correct the charge errors in a single shot. A natural route is a subsystem formulation in the spirit of gauge color codes, which can enable single-shot recovery via redundant local checks \cite{Bombin:2016dq}, or a dimensional-jump strategy \cite{Browneaay4929} adapted to the present dressed-stabilizer setting. Establishing such a fully single-shot switching/decoding procedure and its threshold is left for future work.
}

\textit{Clifford stabilizer code in 2D and T gate.}
Our stabilizer code in 2D is a (non-commuting) Clifford stabilizer code: 
\change{the generators $\{S_j\}$ of the stabilizer group $\mathcal{S}$ are local Clifford unitaries. The generators are not generally commutative, and the code space is spanned by the stabilizer states satisfying $S_j\ket{\psi}=\ket{\psi}$ for all $j$. This generalizes XS stabilizer codes \cite{Ni2015XS} to generic Clifford operators.}

We will present a Clifford stabilizer code that is a special case of those constructed in \cite{Hsin2024_non-Abelian} in generic dimensions, where we focus on 2D.
On the 2D lattice (triangulation), each edge has 3 qubits, whose Pauli $Z$ eigenvalues label the gauge fields $Z^{\red{r}}=(-1)^{a_{\red{r}}},Z^{\green{g}}=(-1)^{a_{\green{g}}},Z^{\blue{b}}=(-1)^{a_{\blue{b}}}$ with $a_{\red{r}},a_{\green{g}},a_{\blue{b}}$ being operator-valued 1-cochains with eigenvalues $0,1$.
Similarly, there are Pauli $X$ operators $X^{\red{r}},X^{\green{g}},X^{\blue{b}}$ on each edge. The Clifford stabilizer is generated by
\begin{align}
    \mathcal{S} = \left\{ \mathcal{S}_{X}^{\red{r}},\mathcal{S}_{X}^{\green{g}},\mathcal{S}_{X}^{\blue{b}},\mathcal{S}_{Z}^{\red{r}},\mathcal{S}_{Z}^{\green{g}},\mathcal{S}_{Z}^{\blue{b}}\right\}~,
\end{align}
with the stabilizer $\mathcal{S}_{X}$ supported at vertices $v$ of the 2D lattice. $X$-stabilizers are generated by $S^{\red{r}}_{X;v},S^{\green{g}}_{X;v},S^{\blue{b}}_{X;v}$ on each vertex $v$ with the form
\change{
\begin{align}
\begin{split}
    S_{X;v}^{\red{r}} &= \left(\prod_{\partial e\supset v} X^{\rd}_e\right) \prod_{e',e'':\int\tilde v\cup \tilde e'\cup \tilde e''\neq 0}CZ^{\gr,\bl}_{e',e''}~, \\
    S_{X;v}^{\gr} &= \left(\prod_{\partial e\supset v} X^{\gr}_e\right) \prod_{e',e'': \int\tilde e''\cup \tilde v\cup \tilde e'\neq 0}
   CZ^{\bl,\rd}_{e,e'}~,\\
   S_{X;v}^{\bl} &= \left(\prod_{\partial e\supset v} X^{\bl}_e\right)\prod_{e',e'':\int \tilde e'\cup \tilde e''\cup \tilde v\neq 0} CZ^{\rd,\gr}_{e',e''}~,
   \label{eq:Sr v}
   \end{split}
\end{align}
}
where $\cup$ denotes cup product of cochains (see SM Sec.~\ref{app:cup} for a review). 
\change{$\tilde v$ is a 0-cochain that takes value 1 on vertex $v$ and 0 on other vertices, and $\tilde e$ is a 1-cochain that takes value $1$ on edge $e$ and 0 on other edges. $CZ^{c,c'}_{e,e'}$ is the CZ gate for a pair of qubits with colors $c,c'$, on edges $e,e'$.}
$\mathcal{S}_Z^{\red{r}}$ is generated by $S^{\red{r}}_{Z;f}$ on each face with
\begin{align}
    S^{\red{r}}_{Z;f} = \prod_{e\in \partial f} Z_e^{\red{r}}~.
\end{align}
Other $Z$-stabilizers with colors $\green{g},\blue{b}$ are defined in the same fashion. The $X$-stabilizers with different colors (e.g., $\mathcal{S}_{X}^{\red{r}},\mathcal{S}_{X}^{\green{g}}$) are commutative in the $Z$-stabilizer subspace of $\mathcal{S}_{Z}^{\red{r}}, \mathcal{S}_{Z}^{\green{g}},\mathcal{S}_{Z}^{\blue{b}}$, therefore $\mathcal{S}$ forms a non-commuting stabilizer group.
{See Fig.~\ref{fig:trianglehamiltonian} for illustrations of $X,Z$-stabilizers on a square lattice.}

The code space is equivalent to a twisted $\mathbb{Z}_2^3 = \Z_2^{\red{r}}\times\Z_2^{\green{g}}\times\Z_2^{\blue{b}}$ gauge theory, with the Dijkgraaf-Witten twist $(-1)^{\int a_{\rd}\cup a_{\gr}\cup a_{\bl}}$.
This theory has 22 anyons including $e_{\rd}, e_{\gr}, e_{\bl}, m_{\rd}, m_{\gr}, m_{\bl}$, whose notations are aligned with \cite{Iqbal:2023wvm}; $m$ denotes magnetic fluxes, and $e$ denotes electric charges of $\Z_2^3$ gauge theory. $e$, $m$ with the same color has $(-1)$ mutual braiding due to the Aharanov-Bohm phase.

The underlying gauge group $\mathbb{Z}_2^3$ has automorphism permuting group elements. 
We focus on the automorphism transforming the elements as 
\begin{equation}\label{eqn:automorphism}
    (g_{\red{r}},g_{\green{g}},g_{\blue{b}})\in \{0,1\}^3\rightarrow (g_{\red{r}}, g_{\rd} + g_{\gr} + g_{\bl} , g_{\bl})~,
\end{equation}
which transforms the gauge groups as $\Z_2^{\rd}\leftrightarrow \Z_2^{\rd\gr}:= \text{diag}(\Z_2^{\rd},\Z_2^{\gr})$, $\Z_2^{\gr}\leftrightarrow \Z_2^{\gr}$,  $\Z_2^{\bl}\leftrightarrow\Z_2^{\gr\bl}$.

We show that the above automorphism induces an emergent symmetry on the Clifford stabilizers, i.e. preserves the logical subspace.
The automorphism symmetry $U$ is constructed from the product of transversal automorphism $V$ and another operator $W$, where
\begin{align}
    V= \bigotimes_e \text{CNOT}^{(\rd,\gr)}_{e} \text{CNOT}^{(\bl,\gr)}_{e}~,
    \label{eq:transversalCNOT}
\end{align}
which transforms the Pauli $X$ operators according to the automorphism (\ref{eqn:automorphism}): $X_e^{\rd} \leftrightarrow X_e^{\rd} X_e^{\gr}$ and $X_e^{\bl} \leftrightarrow X_e^{\bl} X_e^{\gr}$. Pauli $Z$ operators are transformed as $Z_e^{\gr} \leftrightarrow Z_e^{\rd} Z_e^{\gr} Z_e^{\bl}$.
The operator $V$ induces the desired automorphism \eqref{eqn:automorphism} of the gauge group. 
\change{However, $V$ is not yet an emergent symmetry of the Clifford stabilizers; for instance, while Pauli $X$ operators are transformed as $X_e^{\rd} \leftrightarrow X_e^{\rd} X_e^{\gr}$, the $X$-stabilizers are not preserved by $V$: $VS^{\rd}_{X;v}V^\dagger \neq S^{\rd}_{X;v}S^{\gr}_{X;v}$.
The true symmetry hence requires additional modification $W$ to be specified below.}

The true emergent symmetry $U$ is obtained by dressing $V$ with additional transversal operators. While the symmetry operator $U$ can be defined on generic triangulations,
on a 2D square lattice it is
\begin{equation}
\begin{split}
\label{eqn:automorphismsymmetryoperator}
     U =  WV~, \quad 
     W & = \exp\left(\pi i \int \frac{a_{\rd} \cup a_{\bl}}{2}\right) \\
       &= \prod_{p=(0123)} \text{CS}^{\rd,\bl}_{e_{01},e_{13}}(\text{CS}^{\rd,\bl}_{e_{02},e_{23}})^\dagger~,
     \end{split}
\end{equation}
where $(a_{\rd} \cup a_{\bl})/2$ is a 2-cochain integrated over a 2d space. $\text{CS}^{\rd,\bl}_{e,e'}$ is a controlled-S gate for the red qubit at the edge $e$ and blue at $e'$, multiplied over plaquettes with vertices labeled by $0,1,2,3$ (see Fig.~\ref{fig:CS}). The operator $U$ then preserves the stabilizer group, e.g., $US^{\rd}_{X;v}U^\dagger = S^{\rd}_{X;v}S^{\gr}_{X;v}$, see SM Sec.~\ref{sec:proof of theorem 1} for derivations.
This extra operator $W$ plays a crucial role for getting non-Clifford logical action.
As shown in SM Sec.~\ref{app:Tgate}, the unitary operator $U$ preserves the logical subspace:
\begin{theorem}
    The unitary operator $U$ preserves the logical subspace and thus is a logical gate of the Clifford stabilizer code.
\end{theorem}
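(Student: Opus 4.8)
The plan is to check that $U$ commutes with the code-space projector $\Pi_{\mathcal C}$. Since $\mathcal S_Z=\{\mathcal S_Z^{\rd},\mathcal S_Z^{\gr},\mathcal S_Z^{\bl}\}$ is an abelian group of Pauli-$Z$ operators while the $X$-stabilizers commute only after restricting to the subspace $\mathcal H_Z$ on which every $S_{Z;f}^{c}=+1$, it is convenient to split the claim into (i) $U$ preserves $\mathcal H_Z$, and (ii) on $\mathcal H_Z$, $U$ carries the group generated by $\mathcal S_X$ into $\langle\mathcal S_X,\mathcal S_Z\rangle$. Claim (i) is essentially immediate: $W$ is diagonal in the $Z$-basis, hence commutes with every element of $\mathcal S_Z$, while $V$, a tensor product of CNOTs, conjugates $S_{Z;f}^{\rd}\mapsto S_{Z;f}^{\rd}$, $S_{Z;f}^{\bl}\mapsto S_{Z;f}^{\bl}$ and $S_{Z;f}^{\gr}\mapsto S_{Z;f}^{\rd}S_{Z;f}^{\gr}S_{Z;f}^{\bl}$, so $U\mathcal S_Z U^\dagger=\mathcal S_Z$ exactly. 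By invertibility of $U$, claim (ii) then upgrades to $U\langle\mathcal S_X\rangle U^\dagger=\langle\mathcal S_X,\mathcal S_Z\rangle$ on $\mathcal H_Z$, which together with (i) gives $U\Pi_{\mathcal C}U^\dagger=\Pi_{\mathcal C}$.

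For (ii) it suffices to track a single vertex stabilizer $S_{X;v}^{c}$ (its colour partners being analogous). First I would conjugate by $V$ using the Pauli rules for CNOT: the Pauli-$X$ part $\prod_{e\supset v}X_e^{c}$ is carried to the automorphism image (e.g.\ $\prod X_e^{\rd}\mapsto\prod X_e^{\rd}X_e^{\gr}$), while each $\mathrm{CZ}$ dressing factor acquires extra controlled-phase terms because $Z_e^{\gr}\mapsto Z_e^{\rd}Z_e^{\gr}Z_e^{\bl}$. Collecting these, the dressing of $V\,S_{X;v}^{c}\,V^\dagger$ is the local variation of the \emph{pulled-back Dijkgraaf--Witten twist}: on the $Z$-data the automorphism acts as $a_{\rd}\mapsto a_{\rd}$, $a_{\gr}\mapsto a_{\rd}+a_{\gr}+a_{\bl}$, $a_{\bl}\mapsto a_{\bl}$, so
\[
a_{\rd}\cup a_{\gr}\cup a_{\bl}\ \longmapsto\ a_{\rd}\cup(a_{\rd}{+}a_{\gr}{+}a_{\bl})\cup a_{\bl}=a_{\rd}\cup a_{\gr}\cup a_{\bl}+a_{\rd}\cup a_{\rd}\cup a_{\bl}+a_{\rd}\cup a_{\bl}\cup a_{\bl}.
\]
The first summand reproduces the original dressing of $S_{X;v}^{c}$; the task is to show the remaining two are compensated by conjugating further with $W$. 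Here I would use that on $\mathcal H_Z$ the cochains $a_c$ are flat ($\delta a_c\equiv0$ mod $2$, equivalently $\delta a_c=2\beta_c$ with $\beta_c$ a $\ZZ_4$-Bockstein representative) together with simplicial cup-product identities (associativity on the nose, the Hirsch/$\cup_1$ coboundary formula, and $a_c\cup a_c=\mathrm{Sq}^1 a_c\simeq\beta_c$) to conclude $a_{\rd}\cup a_{\rd}\cup a_{\bl}+a_{\rd}\cup a_{\bl}\cup a_{\bl}=\delta\!\left(\tfrac12\,a_{\rd}\cup a_{\bl}\right)$ up to a coboundary whose vertex-localized restriction exponentiates to a product of $\mathcal S_Z$ generators. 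Since $W=\exp\!\big(\pi i\int\tfrac12\,a_{\rd}\cup a_{\bl}\big)$ is precisely the finite-depth CS-gate circuit implementing this $2$-cochain, conjugation by $W$ cancels the defect and yields $U\,S_{X;v}^{c}\,U^\dagger\in\langle\mathcal S_X,\mathcal S_Z\rangle$ on $\mathcal H_Z$. Equivalently, the whole computation is the group-cohomology statement that the automorphism fixes the class of $a_{\rd}\cup a_{\gr}\cup a_{\bl}$ in $H^3(B\ZZ_2^3,\U)$, the trivializing $2$-cochain being $\tfrac12\,a_{\rd}\cup a_{\bl}$.

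I expect the main obstacle to be the cochain-level (rather than cohomology-level) bookkeeping. Cup products of $\ZZ_2$-cochains are graded-commutative and associative only up to higher cup products and coboundaries, so the self-cup terms $a_c\cup a_c$ must be handled with care: they are exactly what promotes the $\ZZ_2$-valued phases of the bare stabilizers to the $\ZZ_4$-valued phase of $W$ (that is, $\mathrm{CZ}\to\mathrm{CS}$), reflecting the $\ZZ_2\to\ZZ_4$ Bockstein, and the precise ordering of $a_{\rd},a_{\bl}$ in the trivializing $2$-cochain has to be pinned down on the fixed triangulation (the square lattice of Fig.~\ref{fig:CS}). One also has to verify that all leftover coboundary terms are absorbed by $\mathcal S_Z$ \emph{only after} projecting onto $\mathcal H_Z$ — this is precisely why $U$ is an \emph{emergent} symmetry of the code rather than a symmetry of the full operator algebra. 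As an independent cross-check, and to avoid dependence on a particular triangulation, one can rerun the argument in the $(2+1)$D path integral of the twisted $\ZZ_2^3$ gauge theory, where $U$ appears as an invertible domain wall; this is the route taken in SM Sec.~\ref{app:automorphism path integral}.
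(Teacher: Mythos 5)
Your proposal follows essentially the same route as the paper's proof: first check that $V$ and the $Z$-diagonal operator $W$ preserve the group generated by $\mathcal{S}_Z$, then restrict to the flat-gauge-field subspace and show that conjugation by $V$ shifts the Dijkgraaf--Witten dressing by $a_{\rd}\cup a_{\rd}\cup a_{\bl}+a_{\rd}\cup a_{\bl}\cup a_{\bl}$, which is compensated by $W$ via the identities $a\cup a=\tfrac{1}{2}d\tilde a$ and the Hirsch/$\cup_1$ formula, yielding the stabilizer automorphism $S_{X;v}^{\rd}\mapsto S_{X;v}^{\rd}S_{X;v}^{\gr}$, etc. The cochain-level bookkeeping you flag as the main obstacle is exactly what the paper's proof carries out explicitly, using the same lifts and identities you name.
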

In the following, we describe a setup where $U$ implements a logical T gate.

\begin{figure}[t]
    \centering
    \includegraphics[width=0.5\linewidth]{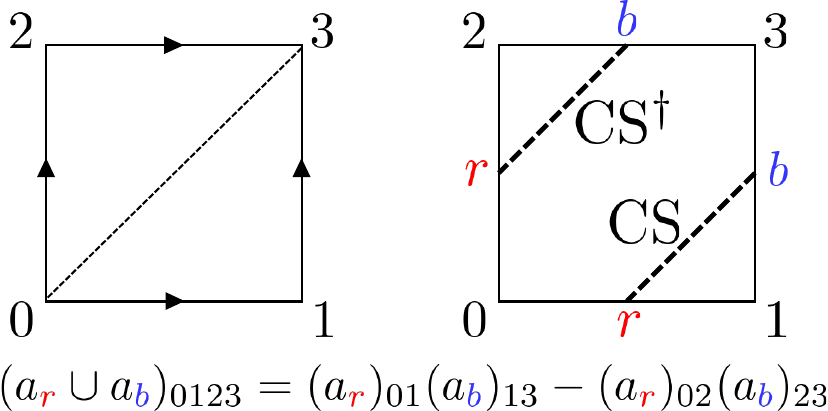}
    \caption{The operator $W$ is expressed using cup product of cochains $a_{\rd}, a_{\bl}$ for $\Z_2$ gauge fields. This operator corresponds to a product of CS, CS$^\dagger$ operators on each plaquette.  }
    \label{fig:CS}
\end{figure}

\medskip

Consider the Clifford stabilizer on an triangle with three gapped boundaries. 
Gapped boundaries are characterized by
symmetry breaking from the bulk $\mathbb{Z}_2^3$ gauge group to subgroups on the boundaries \cite{Beigi_2011,Hsin2024:classifying}; the first boundary ${\cal L}_{\rd}$ breaks $\mathbb{Z}_2^3$ to $\mathbb{Z}_2^{\gr}\times\mathbb{Z}_2^{\bl}$. The second boundary ${\cal L}_{\bl}$ breaks the gauge group to $\mathbb{Z}_2^{\rd}\times \Z_2^{\gr}$, while the third boundary ${\cal L}_{\rd\bl}$ breaks to the diagonal $\mathbb{Z}^{\rd\bl}_2$.
These symmetry breaking at boundaries are represented by the boundary conditions of gauge fields,
\begin{align}
    \begin{split}
        \mathcal{L}_{\rd}: a_{\rd} = 0~, \quad \ \mathcal{L}_{\bl}: a_{\bl} = 0~, \quad 
        \mathcal{L}_{\rd\bl}: a_{\rd} + a_{\bl} = a_{\gr} = 0~. \\
    \end{split}
    \label{eq:2Dgaugefields boundaryconditions}
\end{align}
In the terminology of \cite{Kitaev_2012} for rough and smooth boundaries, the boundary ${\cal L}_{\rd}$ is the rough boundary for $\mathbb{Z}^{\rd}_2$ and smooth boundary for $\mathbb{Z}_2^{\gr}\times\mathbb{Z}_2^{\bl}$; the boundary ${\cal L}_{\bl}$ is rough for $\mathbb{Z}_2^{\bl}$ and smooth for $\mathbb{Z}_2^{\rd}\times \Z_2^{\gr}$; ${\cal L}_{\rd\bl}$ is rough for $\mathbb{Z}_2^{\gr}$ and a new ``mixed'' boundary for $\mathbb{Z}_2^{\rd},\mathbb{Z}_2^{\bl}$ where the combined electric charge $e_{\rd\bl}$ and combined magnetic flux $m_{\rd\bl}$ are condensed, as well as the excitations formed by their products. The boundary stabilizers on a square lattice with this triangular boundaries are shown in Fig.~\ref{fig:trianglehamiltonian}.

\begin{figure}[htb]
    \centering
    \includegraphics[width=0.8\linewidth]{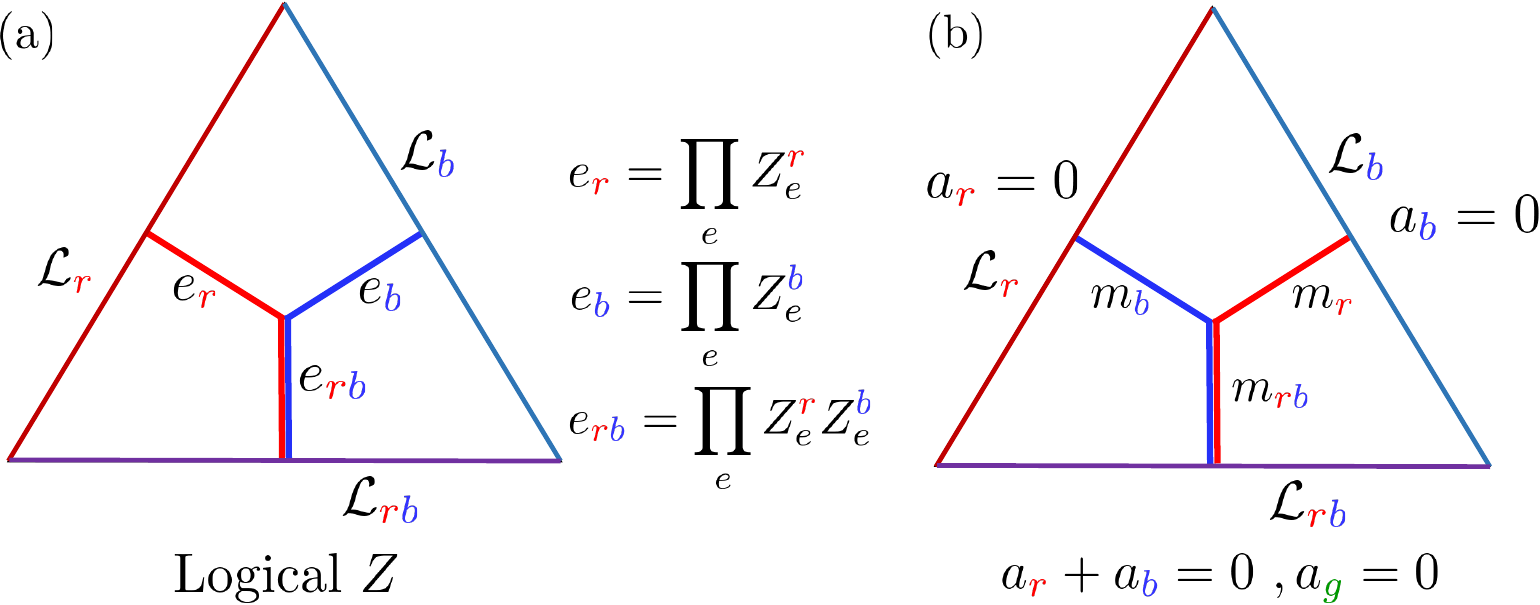}
    \caption{The stabilizer model is defined on a triangle with gapped boundaries $\mathcal{L}_{\rd}, \mathcal{L}_{\bl}, \mathcal{L}_{\rd\bl}$. (a): The logical $Z$ operator is a junction of string operators $e_{\rd}, e_{\bl}, e_{\rd\bl}$ ending at boundaries. Each electric charge $e_{\rd}, e_{\bl}, e_{\rd\bl}$ corresponds to a product of Pauli $Z$ operators along the string. (b): There is the other topological operator formed by a junction of $m_{\bl}, m_{\rd}, m_{\rd\bl}$. This anti-commutes with the logical Pauli $Z$ operator. The figure also shows the boundary conditions of gauge fields. }
    \label{fig:triangle}
\end{figure}

The model on this triangle region has a Hilbert space of a single logical qubit: 
\begin{lemma}
The Clifford stabilizer model with the boundary condition has a single logical qubit.
\end{lemma}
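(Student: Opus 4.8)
The plan is to compute the dimension of the code space by a counting argument, and then confirm it via the explicit topological operators already exhibited in Fig.~\ref{fig:triangle}. Since the code space is equivalent to the twisted $\Z_2^3$ gauge theory (equivalently the $\mathbb D_4$ quantum double) on a disk with three gapped boundaries labeled by the Lagrangian algebras $\L_{\rd}$, $\L_{\bl}$, $\L_{\rd\bl}$, the ground-space dimension equals the number of bulk anyons $a$ such that $a$ can simultaneously end on all three boundaries in a consistent fusion channel — more precisely, $\dim \H = \dim\big(\bigoplus_{a} V^{\L_{\rd}}_a \otimes V^{\L_{\bl}}_a \otimes V^{\L_{\rd\bl}}_a\big)$ summed appropriately over the fusion of the boundary-condensable sectors, which for a trinion (pair of pants) with fixed boundary types is a standard TQFT computation. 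The claim is that this number is exactly $2$.

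First I would set up the stabilizer count directly on the lattice, which is the cleanest route and avoids invoking TQFT machinery: $\dim\H = 2^{(\#\text{qubits}) - (\#\text{independent stabilizer generators})}$, where for a non-commuting stabilizer group of the type in \cite{Hsin2024_non-Abelian} one counts the independent generators of the associated Pauli-type presentation (the $Z$-stabilizers $\mathcal S_Z^{\rd},\mathcal S_Z^{\gr},\mathcal S_Z^{\bl}$ together with the dressed $X$-stabilizers restricted to their common $+1$ eigenspace). On the closed surface this yields the known anyon/ground-state count; on the triangle the bulk relations are the same, so the only new bookkeeping is at the three boundaries, where Eq.~\eqref{eq:2Dgaugefields boundaryconditions} tells us which gauge-field components are frozen to zero ($a_{\rd}=0$ on $\L_{\rd}$, etc.) and hence which boundary stabilizers are added and which bulk generators become dependent. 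I would do an Euler-characteristic-style count: vertices minus edges plus faces of the triangulated disk, weighted by the three colors, with the boundary conditions \eqref{eq:2Dgaugefields boundaryconditions} removing one $\Z_2$ degree of freedom per colored boundary strand, arriving at a net of one free $\Z_2$, i.e.\ one logical qubit.

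To certify that the surviving logical operator is genuinely a qubit and not larger or trivial, I would then check the two displayed topological operators: the junction of electric strings $e_{\rd},e_{\bl},e_{\rd\bl}$ ending on $\L_{\rd},\L_{\bl},\L_{\rd\bl}$ (which is consistent precisely because $e_{\rd}$ condenses on $\L_{\rd}$, $e_{\bl}$ on $\L_{\bl}$, and $e_{\gr}\oplus e_{\rd\bl}\oplus e_{\rd\gr\bl}$ appear in $\L_{\rd\bl}$, so $e_{\rd}\cdot e_{\bl}=e_{\rd\bl}$ can terminate there), giving a logical $\bar Z$; and the junction of magnetic strings $m_{\bl},m_{\rd},m_{\rd\bl}$ with the dual matching of fluxes into the three Lagrangian algebras, giving a $\bar X$ that anticommutes with $\bar Z$ because of the color-matched $(-1)$ mutual braiding noted after Eq.~\eqref{eq:anyonpermutation}. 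Both are nontrivial (not stabilizers) because the corresponding condensed sectors are nontrivial and the strings cannot be locally contracted on the disk, and their algebra $\bar X\bar Z=-\bar Z\bar X$ forces $\dim\H\ge 2$; combined with the counting bound $\dim\H\le 2$ this pins it to exactly one logical qubit.

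The main obstacle I anticipate is the boundary bookkeeping in the non-commuting stabilizer formalism: unlike an ordinary CSS code, here the dressed $X$-stabilizers only mutually commute after projecting to the $Z$-stabilizer subspace, so "counting independent generators" must be done on the reduced space, and one must be careful that the boundary terms of Fig.~\ref{fig:trianglehamiltonian} do not introduce extra relations (or extra independent constraints) beyond what \eqref{eq:2Dgaugefields boundaryconditions} predicts. I expect this is handled by working color-by-color — for each color the $Z$-sector is an honest $\Z_2$ gauge theory on a disk with mixed boundary conditions whose ground-state count is $1$, and the cup-product dressing and the diagonal boundary $\L_{\rd\bl}$ only correlate the three sectors without changing the total — but verifying that the correlation is exactly trivial (net multiplicative factor $1$, not $2$ or $1/2$) is the delicate point, and is where I would lean on the anyon-theoretic cross-check via the Lagrangian algebras \eqref{eq:Lagrangian} as an independent confirmation that the trinion with these three boundary conditions carries a two-dimensional space.
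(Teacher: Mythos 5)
Your proposal is correct in substance and, on the decisive step, coincides with the paper's own argument: both proofs establish $\dim\H\ge 2$ by exhibiting the electric tripod $e_{\rd},e_{\bl},e_{\rd\bl}$ terminating on $\L_{\rd},\L_{\bl},\L_{\rd\bl}$ and the anticommuting magnetic tripod $m_{\bl},m_{\rd},m_{\rd\bl}$, using the color-matched $(-1)$ mutual braiding. Where you differ is the upper bound $\dim\H\le 2$. The paper gets it by direct inspection: states are labeled by flat $\Z_2^3$ gauge-field configurations, and with the boundary conditions \eqref{eq:2Dgaugefields boundaryconditions} the electric tripod is the \emph{only} non-contractible Pauli $Z$ string on the triangle (any single-color string with both endpoints on one boundary arc is contractible, and $e_{\gr}$ alone cannot stretch between two distinct arcs), so there is a single generator of the logical $Z$ algebra. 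You instead propose a stabilizer/Euler-characteristic count, which you do not carry out and whose delicacy in the non-commuting formalism you yourself flag; this is heavier machinery than needed, and the "net multiplicative factor $1$" worry you raise is exactly the point the paper's enumeration of non-contractible strings settles for free. If you replace your counting sketch with the observation that the boundary conditions admit only the one non-contractible $Z$ string, your proof collapses to the paper's. As written, nothing in your argument is wrong, but the upper bound is a plan rather than a proof.
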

\begin{proof}
    The state is generally labeled by the eigenvalues of nontrivial electric charge operators $e_{\rd},e_{\gr},e_{\bl}$, which characterizes the configuration of $\Z_2^3$ gauge fields on a space.
    On the triangle, the only non-contractible Pauli $Z$ string operator is the junction of electric charges $e_{\rd}, e_{\bl}, e_{\rd\bl}$ shown in Fig.~\ref{fig:triangle} (a). To see that this Pauli $Z$ operator can have nontrivial eigenvalues, let us consider the other topological operator formed by the junction of $m_{\bl}, m_{\rd}, m_{\rd\bl}$ as shown in ~\ref{fig:triangle} (b). This anti-commutes with the Pauli $Z$ operator, so the Hilbert space has a single qubit $\{\ket{0},\ket{1}\}$ labeled by the Pauli $Z$ eigenvalue.

\end{proof}

The automorphism symmetry $U$ preserves the boundaries: the unbroken gauge group for each of the boundaries at each boundary, $\Z_2^{\gr}\times \Z_2^{\bl}, \Z_2^{\rd}\times\Z_2^{\gr}, \Z_2^{\rd\bl}$, is invariant under the automorphism \eqref{eqn:automorphism}. 
Thus the automorphism symmetry $U$ 
generates a logical gate with boundaries.

\begin{theorem}
    The automorphism symmetry $U$ in the Clifford stabilizer model with the boundary condition implements logical T gate.
\end{theorem}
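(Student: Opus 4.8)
Since the code carries a single logical qubit (the Lemma above) and $U$ preserves the code space (the Theorem above), $U$ descends to a unitary $\bar U$ on the one-qubit code space, fixed up to a global phase by its conjugation action on a complete set of logical Paulis. I would take these to be the Pauli-$Z$ string junction $\lo{Z}$ of Fig.~\ref{fig:triangle}(a) --- a product of $Z^{\rd}$ and $Z^{\bl}$ operators implementing the condensed charges $e_{\rd},e_{\bl},e_{\rd\bl}$ --- and the flux junction $M$ of Fig.~\ref{fig:triangle}(b), a disorder-type operator built from $X^{\rd},X^{\bl}$ (dressed with $\mathrm{CZ}$'s along the strings so as to commute with the other-color stabilizers) implementing $m_{\bl},m_{\rd},m_{\rd\bl}$, which anticommutes with $\lo{Z}$ and hence represents the logical $\lo{X}$. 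The action on $\lo{Z}$ is immediate: $W$ is diagonal in the $Z$ basis and $V$ fixes $Z^{\rd}_e,Z^{\bl}_e$ (it acts only as $Z^{\gr}_e\to Z^{\rd}_eZ^{\gr}_eZ^{\bl}_e$), so $U\lo{Z}U^\dagger=\lo{Z}$; equivalently the permutation \eqref{eq:anyonpermutation} fixes $e_{\rd},e_{\bl}$ and hence $e_{\rd\bl}$. Therefore $\bar U$ commutes with $\lo{Z}$, so $\bar U=\mathrm{diag}(\alpha,\beta)$ in the $\lo{Z}$ eigenbasis, and the remaining task is to show $\beta/\alpha=e^{i\pi/4}$. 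It is worth noting first that $V^2=1$ and $[V,W]=0$ (because $W$ depends only on $Z^{\rd},Z^{\bl}$, which $V$ fixes), whence $U^2=W^2=(-1)^{\int a_{\rd}\cup a_{\bl}}$ is a product of $\mathrm{CZ}$ gates and thus Clifford --- consistent with $\mathrm{T}^2=\mathrm{S}$; evaluating $W^2$ on the code space should give the logical $\lo{S}$, already showing $(\beta/\alpha)^2=i$ and hence that $\bar U$ is non-Clifford at the third level.

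To pin $\beta/\alpha$ to $e^{i\pi/4}$ I would conjugate $M$ directly, using $U=WV$. Conjugation by $V$ turns the strings creating $m_{\bl},m_{\rd},m_{\rd\bl}$ into those for $m_{\gr\bl},m_{\rd\gr},m_{\rd\bl}$, in accord with \eqref{eq:anyonpermutation}, and these remain admissible boundary-ending strings because each image still condenses on the same boundary by \eqref{eq:Lagrangian}. Conjugation by $W$ then multiplies the result by an explicit diagonal operator $D$ supported near the junction, computed from the way $\mathrm{CS}$ conjugates Pauli $X$ --- conjugating an $X^{\rd}$ (resp.\ $X^{\bl}$) factor by a $\mathrm{CS}^{\rd,\bl}$ produces an $S$ on the partner qubit together with a $\mathrm{CZ}$, with relative signs contributed by the $\mathrm{CS}^\dagger$ factors in $W$. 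Collecting everything yields $UMU^\dagger=D\,M'$ with $M'$ the $V$-image of $M$, and on the code space the logical operator $M^\dagger\,UMU^\dagger$ should reduce, using the bulk and boundary stabilizers, to $\mathrm{diag}(e^{i\pi/4},e^{-i\pi/4})$ in the $\lo{Z}$ eigenbasis; combined with $U\lo{Z}U^\dagger=\lo{Z}$ this forces $\bar U=\lo{T}$ up to a global phase (consistent with $\lo{T}\lo{Z}\lo{T}^\dagger=\lo{Z}$ and $\lo{T}\lo{X}\lo{T}^\dagger=(\lo{X}+\lo{Y})/\sqrt2$). Equivalently, $\beta/\alpha=\langle 0|_L M^\dagger UM|0\rangle_L/\langle 0|_L U|0\rangle_L$, which in the path-integral language of the SM is a twisted $\Z_2^3$ Dijkgraaf--Witten amplitude on the triangle times an interval with the automorphism domain wall and the logical flux holonomy inserted; the Dijkgraaf--Witten twist, interacting with $V$, is what turns the naively $\Z_4$ phase from $W$ into the $\Z_8$ phase $e^{i\pi/4}$.

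The main obstacle is precisely this phase: establishing that it is exactly $e^{+i\pi/4}$, and not $e^{-i\pi/4}$ or $e^{\pm i3\pi/4}$ (which would instead give $\lo{T}^\dagger$ or $\lo{Z}\lo{T}^{\pm1}$). Nailing it down requires careful bookkeeping of the $\mathrm{CS}$ versus $\mathrm{CS}^\dagger$ placements relative to the branching structure along the deformed flux strings, a consistent treatment of the three corners of the triangle where strings of different colors are joined and where the mixed boundary $\cL_{\rd\bl}$ identifies $e_{\rd}$ with $e_{\bl}$ (so that the $S^{\rd}$ and $S^{\bl}$ contributions add coherently rather than cancel), and a check that the deformed operator stays a genuine logical operator --- commuting with all bulk and boundary stabilizers --- so that its code-space restriction is well defined. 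One also wants to confirm that $\bar U$ itself is well defined, i.e.\ that $U$ has a definite global phase on the code space; this is automatic once $U\lo{Z}U^\dagger=\lo{Z}$ is known, since the $\lo{Z}=+1$ subspace is one-dimensional.
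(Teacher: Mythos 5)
There is a genuine gap, and it sits exactly where you locate your ``main obstacle.'' Your plan works entirely with the bulk operator $W=\prod\text{CS}^{\rd,\bl}(\text{CS}^{\rd,\bl})^\dagger$ of \eqref{eqn:automorphismsymmetryoperator}, but on the triangle the logical operator is the boundary-modified $W$ of \eqref{eqn:automorphismsymmetryboundary}, which carries an extra string of single-qubit $(\text{T}^{\rd}_e)^\dagger$ gates along the $\mathcal{L}_{\rd\bl}$ boundary. This term is not optional: without it $WV$ does not commute with the truncated boundary stabilizers $S^{\rd}_{X;v}S^{\bl}_{X;v}$ (verifying this commutation is the bulk of the paper's proof), so the unmodified $U$ is not a logical operator on the bounded code and your $\bar U$ is not defined. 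Moreover, in the paper's evaluation the bulk CS/CS$^\dagger$ part contributes \emph{nothing} to the diagonal logical phase --- viewing the triangle as a single $2$-simplex, the cochain $\tfrac12\tilde a_{\rd}\cup\tilde a_{\bl}$ vanishes on the flat gauge-field configurations labeling $\overline{\ket{0}},\overline{\ket{1}}$ --- and the entire eighth root of unity comes from the boundary $\text{T}^\dagger$ string acting on the nontrivial holonomy $\int_{\text{bdry}_{\rd\bl}}a_{\rd}=1$ of $\overline{\ket{1}}$, giving $U\overline{\ket{m}}=e^{-i\pi m/4}\overline{\ket{m}}$ (so the gate is in fact $\overline{\text{T}}^\dagger$). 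Your proposed bookkeeping of CS versus CS$^\dagger$ along deformed flux strings would therefore never produce the phase you are after, and your identity $U^2=W^2=(-1)^{\int a_{\rd}\cup a_{\bl}}$ is also false for the correct $W$ (it acquires boundary $\text{S}^\dagger$ factors, which is what makes $\bar U^2=\bar S^\dagger$ rather than a Pauli).

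Beyond this, the proposal is explicitly incomplete on the one point that constitutes the theorem: you stop at ``the remaining task is to show $\beta/\alpha=e^{i\pi/4}$'' and list the difficulties without resolving them. The paper's route is also simpler than yours: since $U$ is diagonal in the physical $Z$ basis, it evaluates $U$ directly on the two logical basis states (gauge-field configurations) rather than conjugating the flux-junction operator $M$; your final remark that $\beta/\alpha=\langle 0|_LM^\dagger UM|0\rangle_L/\langle 0|_LU|0\rangle_L$ shows you see this equivalence, but the calculation you would actually need is the holonomy evaluation of the boundary term, not the bulk CS algebra. Finally, note that code-space preservation in the presence of boundaries does not follow from Theorem~1 (which is the closed/bulk statement); it must be re-proved for the truncated boundary stabilizers with the modified $W$, as the paper does.
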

The detailed proof is in SM Sec.~\ref{app:Tgateboundary}. To obtain the non-Clifford action, it is essential that in the presence of gapped boundaries, the logical operator $U=WV$ acquires boundary modifications \cite{Hsin2024:classifying}. For instance, on a square lattice with boundary conditions shown in Fig.~\ref{fig:trianglehamiltonian}, the logical operator is $U=WV$ with $V$ given by \eqref{eq:transversalCNOT}, and $W$ gets modified as
\begin{equation}\label{eqn:automorphismsymmetryboundary}
     W = 
     \prod_{p=(0123)} \text{CS}^{\rd,\bl}_{e_{01},e_{13}}(\text{CS}^{\rd,\bl}_{e_{02},e_{23}})^\dagger \times \prod_{e\in \text{bdry}_{\rd\bl}} (\text{T}^{\rd}_{e})^\dagger~,
\end{equation}
where the second product is over edges on the boundary $\mathcal{L}_{\rd\bl}$. As shown in SM Sec.~\ref{app:Tgateboundary}, these boundary operators contribute as a logical T$^\dagger$ gate;
the automorphism symmetry $U$ acts on the logical qubit in the Pauli $Z$ basis as $U\overline{|m\rangle}= e^{-\frac{\pi i m}{4}}\overline{|m\rangle}=\overline{\text{T}}^\dagger\overline{|m\rangle}$, $m=0,1$.

Let us compare the transversal logical T gate with the construction in \cite{Davydova:2025ylx}, where a T gate of a folded surface code on a triangle is obtained via code switching through the $\mathbb{D}_4$ gauge theory. The code switching corresponds to the action of the gapped domain wall separating the $\Z_2^2$ gauge theory for a folded surface code and the $\mathbb{D}_4$ gauge theory; in the spacetime picture, $\mathbb{D}_4$ gauge theory is sandwiched by a pair of gapped domain walls DW and DW', see Fig.~\ref{fig:switching}. In fact, the two domain walls DW, DW' are related by the transversal unitary operator $\tilde U$ with $\text{DW'} = \text{DW} \times \tilde U$,
where in their protocol, $\tilde U$ acts by an automorphism acting on the gauge groups as $\Z_2^{\rd}\leftrightarrow \Z_2^{\rd}, \Z_2^{\gr}\leftrightarrow \Z_2^{\bl}$.
Also in their protocol, the $\mathbb{D}_4$ gauge theory on a triangle is bounded by the following three boundary conditions $\tilde{\mathcal{L}}_{\rd}, \tilde{\mathcal{L}}_{\bl},\tilde{\mathcal{L}}_{\rd\bl}$ with the unbroken gauge groups
\begin{align}
    \tilde{\mathcal{L}}_{\rd}: \Z_2^{\gr}\times \Z_2^{\bl}, \quad  \tilde{\mathcal{L}}_{\bl}: \Z_2^{\rd\gr}\times \Z_2^{\rd\bl}, \quad \tilde{\mathcal{L}}_{\rd\bl}: \Z_2^{\rd}.
\end{align}

One can then see that the above unitary $\tilde U$, together with the gapped boundaries $\tilde{\mathcal{L}}_{\rd},\tilde{\mathcal{L}}_{\bl},\tilde{\mathcal{L}}_{\rd\bl}$ are transformed into the ones without tilde by an automorphism of gauge groups
\begin{align}
    (g_{\red{r}},g_{\green{g}},g_{\blue{b}})\in \{0,1\}^3\rightarrow (g_{\red{r}},  g_{\gr}, g_{\rd} + g_{\gr} + g_{\bl})~,
    \label{eq:autofortilde gauge fields}
\end{align}
which transforms the gauge groups as $\Z_2^{\rd}\leftrightarrow \Z_2^{\rd\bl}$, $\Z_2^{\gr}\leftrightarrow \Z_2^{\gr\bl}$,  $\Z_2^{\bl}\leftrightarrow\Z_2^{\bl}$, inducing an automorphism of $\mathbb{D}_4$ gauge theory.
Therefore, by interpreting the domain walls $\text{DW'}\times (\text{DW})^\dagger$ as the action of transversal unitary $\text{DW}\times \tilde U\times  (\text{DW})^\dagger$, the action of $\tilde U$ on the $\mathbb{D}_4$ gauge theory is identified as a logical T gate $U$ by an automorphism \eqref{eq:autofortilde gauge fields}. 

\begin{figure}[htb]
    \centering
    \includegraphics[width=0.6\linewidth]{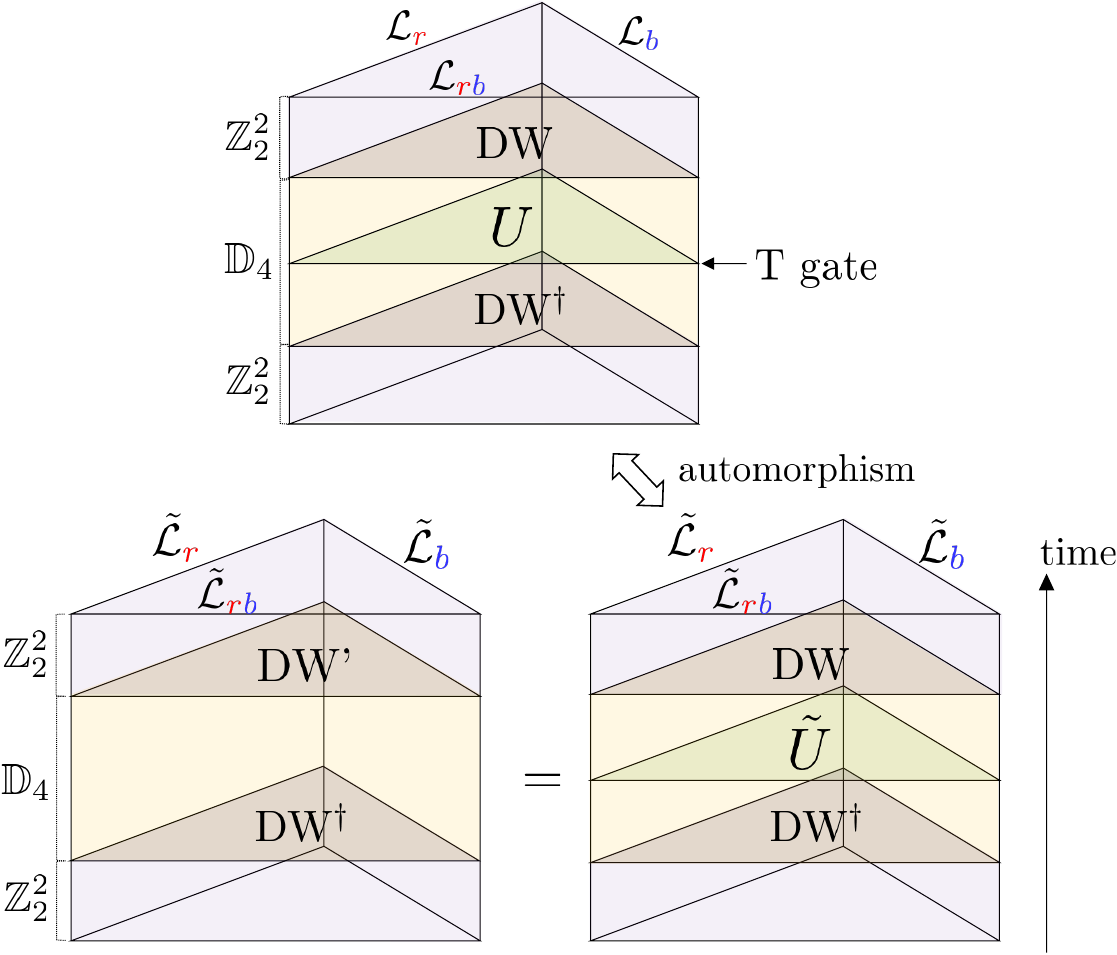}
    \caption{Bottom left: The code switching is understood as a combination of gapped domain walls $\text{DW'}\times \text{DW}^\dagger$, where each domain wall separates $\mathbb{D}_4$ and $\Z_2^2$ gauge theory. Bottom right: The domain wall DW' is equivalent to a combination $\text{DW}\times \tilde U$ with an transversal unitary $\tilde U$. Top: By an automorphism \eqref{eq:autofortilde gauge fields}, the operator $\tilde U$ is identified as the logical T gate $U$.}
    \label{fig:switching}
\end{figure}

Within our stabilizer code, the $\text{T}$ magic state can be fault-tolerantly prepared by the following code switching along with the logical T gate:
\begin{enumerate}
    \item We start with a logical state $\overline{\ket{+}}$ of the  $\Z_2^{\rd}\times\Z_2^{\bl}$ gauge theory with boundaries, whose stabilizers are simply obtained by eliminating green qubits from Fig.~\ref{fig:trianglehamiltonian}. 
    The code space is equivalent to a folded surface code.
    We also initialize the green qubits as $\bigotimes_e \ket{0}^{\gr}_e$. 
    \item We then perform $O(d)$ rounds of syndrome measurements ($d$ is the distance) of the unknown Clifford stabilizers $S_{X;v}^{\gr}$ (with $\pm 1$ random eigenvalues) and the other Clifford and Pauli $Z$ stabilizers in the entire code; \change{this prepares a state of the Clifford stabilizer code \cite{Tantivasadakarn2024LRE, verresen2022efficient, Tantivasadakarn2023shortest}.}
    Meanwhile, we apply the ``just-in-time" decoder from Ref.~\cite{Davydova:2025ylx}.  The non-abelian flux errors \change{violating the plaquette $Z$-stabilizers} 
    are paired up  in real time  to form closed flux worldline whenever the observation time of the syndromes is comparable to the separation of the syndromes. After the $O(d)$ rounds of syndrome measurements, we use the matching or RG decoder \cite{ducloscianci2010renormalizationgroupdecodingalgorithm} to clean up the remaining Abelian charge syndromes and the corresponding $Z$ errors.  The above procedure is also called a gauging procedure, where one effectively switches the code space into the $\mathbb{D}_4$ code.
    \item We act the logical T$^\dagger$ gate $U$.
    \item We then measure the Pauli $Z^{\gr}_e$ operators on each edges, followed by acting $S_X^{\gr}$ to flip the strings of $\ket{1}^{\gr}_e$ states. This switches the code back to the $\Z_2^{\rd}\times\Z_2^{\bl}$ surface code followed by $O(d)$ rounds of error corrections.
    The final state is given by $\overline{\text{T}}^\dagger\overline{\ket{+}}$.
\end{enumerate}

In addition to T gate, we construct transversal logical CS gate using six layers of the 2D Clifford stabilizer model.
We note that the CS gate in the 3D color code has been constructed in Ref.~\cite{brown2025colorcodelogicalcontrols}.
See SM Sec.~\ref{sec:CS} for the construction of the logical CS gate in 2D. 

\begin{figure}[htb]
    \centering
    \includegraphics[width=0.75\linewidth]{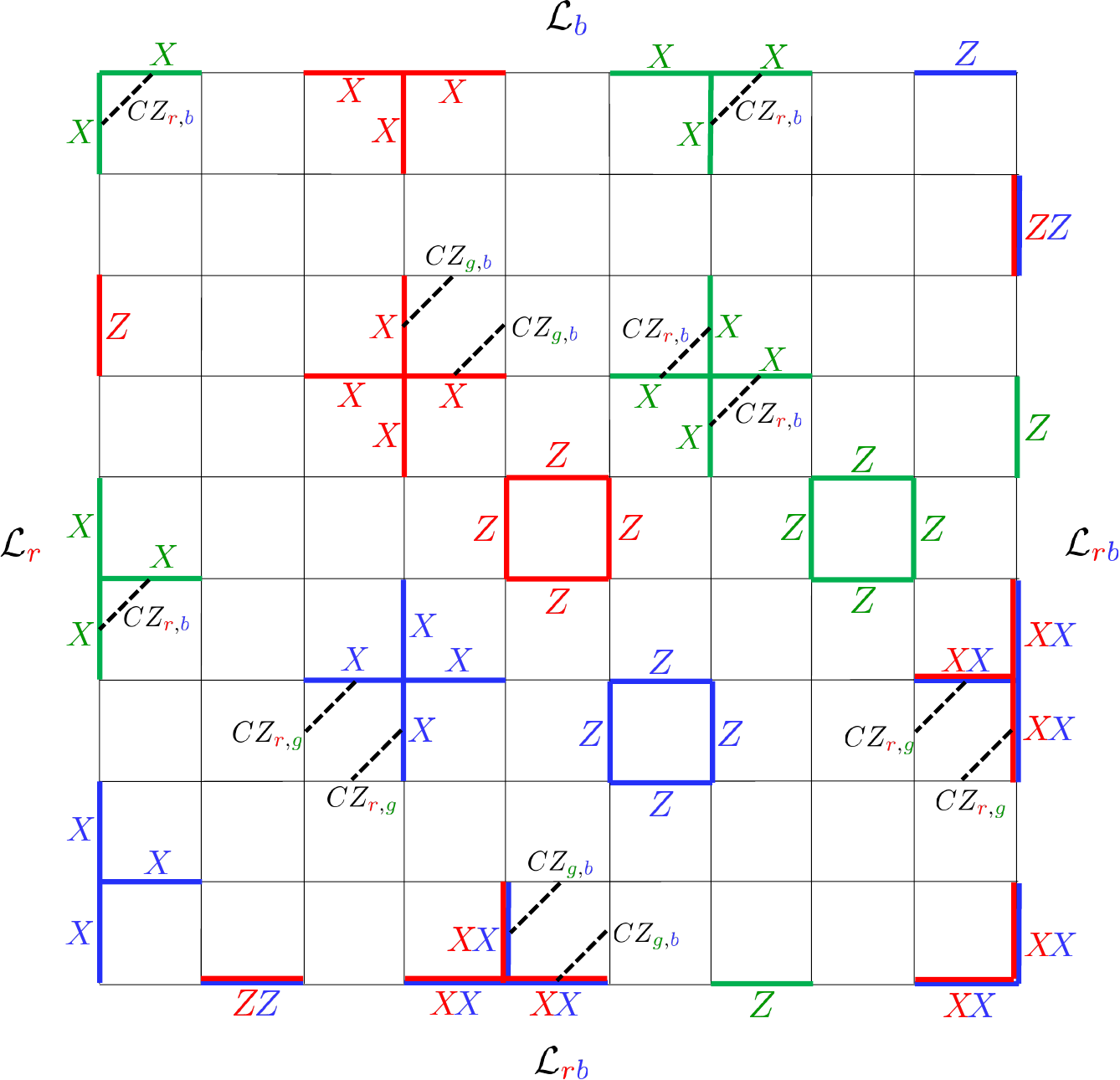}
    \caption{The stabilizer on a square lattice surrounded by three gapped boundaries. The bottom boundary of the triangle $\mathcal{L}_{\rd\bl}$ is realized at the bottom and right boundaries of the square, therefore a square is bounded by three gapped boundaries. 
    }
    \label{fig:trianglehamiltonian}
\end{figure}

\textit{Non-Clifford stabilizer code in 3D and $\sqrt{T}$ gate.} 
We extend our construction to 3D, where we construct a non-Clifford $\sqrt{\text{T}}$ gate at 4th level of the Clifford hierarchy in a $\Z_2^4$ twisted gauge theory. The Dijkgraaf-Witten twist is given by $(-1)^{\int a_{\rd}\cup a_{\yl}\cup a_{\bl}\cup a_{\gr} }$.
On a 3D lattice, we now have four physical qubits colored by $\rd,\gr,\bl,\yl$ on each edge.
The non-Clifford stabilizer is generated by
\begin{align}
    \mathcal{S} = \left\{ \mathcal{S}_{X}^{\red{r}},\mathcal{S}_{X}^{\green{g}},\mathcal{S}_{X}^{\blue{b}},\mathcal{S}_{X}^{\yl}, 
    \mathcal{S}_{Z}^{\red{r}},\mathcal{S}_{Z}^{\green{g}},\mathcal{S}_{Z}^{\blue{b}},\mathcal{S}_{Z}^{\yl}\right\}~,
\end{align}
where the stabilizers $\mathcal{S}_X$ are supported at vertices. For instance, $\mathcal{S}_{X}^{\red{r}}$ is generated by $\mathcal{S}_{X}^{\red{r};v}$ on each vertex as
\change{
\begin{align}
      S_{X;v}^{\red{r}} = \left(\prod_{\partial e\supset v} X^{\rd}_e\right) \prod_{\substack{e_1,e_2,e_3: \\ \int\tilde v\cup \tilde e_1\cup \tilde e_2\cup \tilde e_3\neq 0}}CCZ^{\yl,\bl,\gr}_{e_1,e_2, e_3}~,
\end{align}
}
defined in a similar fashion as the 2D case.
$\mathcal{S}_Z$ again are generated by the stabilizers on faces in the form of e.g., 
\begin{align}
    S^{\red{r}}_{Z;f} = \prod_{e\in \partial f} Z_e^{\red{r}}~.
\end{align}
See SM Sec.~\ref{app:sqT} for a complete description of stabilizers.

To construct a $\sqrt{\text{T}}$ gate, we locate a code on a 3D tetrahedron with four gapped boundaries on faces.
Each gapped boundary condition is specified by the boundary conditions of gauge fields:
\begin{align}
\nonumber    \begin{split}
        \text{1st, 2nd, and 3rd boundary}\quad & a_{\bl} = 0,  \  a_{\rd} = 0, \ a_{\gr} = 0, \\
        \text{4th boundary}\quad & a_{\rd}+a_{\gr} + a_{\bl} = a_{\yl} = 0~, \\
    \end{split}
\end{align}
which naturally generalizes \eqref{eq:2Dgaugefields boundaryconditions}.
These boundary conditions are enforced by boundary stabilizers
$Z_{\bl}$ (1st boundary), $Z_{\rd}$ (2nd boundary), $Z_{\gr}$ (3rd boundary), and $Z_{\rd}Z_{\gr}Z_{\bl}, Z_{\yl}$ (4th boundary) on boundary edges. The $X$ stabilizers at the boundaries are obtained by truncating the stabilizers in $\mathcal{S}_X$ that commute with the above boundary $Z$ stabilizers. This naturally generalizes the construction of boundary $X$ stabilizers shown in Fig.~\ref{fig:trianglehamiltonian} to 3D.
\change{The stabilizer code with above boundaries stores a single logical qubit; it has a single nontrivial $\Z_2^4$ gauge field configuration on a tetrahedron that corresponds to the logical $\ket{\overline{1}}$ state.} 
Similar to the 2D case, there is a single logical $Z$ operator formed by a junction of $Z^{\rd}, Z^{\gr}, Z^{\bl}$ strings. See SM Sec.~\ref{app:sqT} for detailed discussions.

The $\sqrt{\text{T}}$ gate again has an expression $U=WV$,
with
\begin{align}
   V= \bigotimes_e \text{CNOT}^{(\rd,\yl)}_{e} \text{CNOT}^{(\gr,\yl)}_{e}\text{CNOT}^{(\bl,\yl)}_{e}~,
\end{align}
\change{and $W$ is expressed by a product of CCS gates in the 3D bulk, CT$^\dagger$ gates on the 4th boundary, and $\sqrt{\text{T}}$ gates on the 1D hinge between the 1st, 4th boundary. In SM Sec.~\ref{app:sqT}, we present an explicit form of $W$ and demonstrate that $U$ generates a logical $\sqrt{\text{T}}$ gate.}

In conclusion, we discover transversal non-Clifford logical gates in Clifford hierarchy stabilizer codes, including  T, CS gates in 2D and $\sqrt{\text{T}}$ gate in 3D. 
The code can be further generalized to twisted $\mathbb{Z}_2^N$ gauge theory in $(N-1)$ spatial dimensions with the Dijkgraaf-Witten twist $(-1)^{a_1\cup a_2\cup\cdots a_N}$ for the $\mathbb{Z}_2^N$ gauge fields $\{a_i\}_{i=1}^N$. In such cases, the $\mathcal{S}_X$ stabilizers have gates in the $(N-1)^\text{th}$ level of Clifford hierarchy, while $Z$-stabilizers are Pauli operators. Its automorphism symmetry can realize $\overline{R_{N}}$ logical gate. 
{Our construction for the family of $\overline{R_{N}}$ logical gates surpasses the  Bravyi-K\"onig bound of Pauli codes.}
Higher-hierarchy codes require increased local resources as $N$ increases; the generalized Bravyi-K\"onig bound can be viewed as a tradeoff between spatial dimension and local complexity in non-Pauli stabilizers.

\section*{Acknowledgment}

We thank Andreas Bauer, Tyler Ellison, Sheng-Jie Huang, Sakura Schafer-Nameki, Nat Tantivasadakarn, Julio Magdalena de la Fuente, Zhenghan Wang, Dominic Williamson for discussions. We thank Julio Magdalena de la Fuente for comments on the draft.
R.K. is supported by the U.S. Department of Energy through grant number DE-SC0009988 and the Sivian Fund.
G.Z. is supported by the U.S. Department of Energy, Office of Science, National Quantum
Information Science Research Centers, Co-design Center
for Quantum Advantage (C2QA) under contract number
DE-SC0012704. G.Z was also  supported in part by grant NSF PHY-2309135 to the Kavli Institute for Theoretical Physics (KITP), where the work was completed.
P.-S.H. is supported by Department of Mathematics King’s College
London. 
R.K. and P.-S.H. thank Isaac Newton Institute for hosting the workshop ``Diving Deeper into Defects: On the Intersection of Field Theory, Quantum Matter, and Mathematics'', during which part of the work is completed. Near the completion of our work, we learned of another upcoming work \cite{Tyler2025groupsurfacecode} that also discusses transversal non-Clifford gates in related codes in 2D.

\bibliographystyle{utphys}
\bibliography{biblio, mybib_merge}

\onecolumngrid

\vspace{0.3cm}

\newpage

\begin{center}
\Large{\bf Supplemental Materials}
\end{center}
\onecolumngrid

\section{Review: Cup Product on Triangulations and Lattices}
\label{app:cup}

The cup product (see e.g. \cite{milnor1974characteristic}) provides a way to combine $\mathbb{Z}_N$-valued cochains on a triangulated manifold into higher-degree cochains. An $m$-cochain $f_m$ is a function assigning a $\mathbb{Z}_N$ value to each oriented $m$-simplex (or $m$-cell). Given two such cochains, $f_m$ and $g_n$, their cup product $f_m \cup g_n$ defines an $(m+n)$-cochain. 
To define a cup product on a triangulated manifold, it is necessary to specify a branching on the triangulation. A branching structure is an assignment of an orientation to each edge of every simplex such that no oriented loop exists within any 2-simplex. Then, each vertex of a single $k$-simplex is labeled by $(0,1,2,\dots,k)$ according to the induced ordering of vertices.

On a triangulated manifold with a branching structure, where each simplex is labeled by ordered vertices $(0,1,2,\dots,k)$, the cup product is evaluated locally on an $(m+n)$-simplex as  
\begin{equation}
(f_m \cup g_n)(0,1,2,\dots,m+n)
= f_m(0,1,\dots,m)\, g_n(m,m+1,\dots,m+n)~.
\end{equation}

For hypercubic lattices, where the basic cells are unit hypercubes $s_k$ given by $[0,1]^k$ with coordinates $(x^1, x^2, \dots, x^k)$, the construction is simply obtained by triangulating a hypercube into simplices~\cite{chen2021higher}. On an $(m+n)$-dimensional hypercube $s_{m+n}$, the cup product is defined by  
\begin{equation}
(f_m \cup g_n)(s_{m+n})
= \sum_I f_m([0,1]^I)\, g_n((x^I=1, x^{\bar I}=0) + [0,1]^{\bar I})~,
\end{equation}
where the sum runs over all subsets $I$ of $\{1,2,\dots,m+n\}$ with $|I|=m$, and $\bar I$ denotes the complement of $I$. Each term corresponds to a pair of lower-dimensional hypercubes: $f_m$ is evaluated on an $m$-dimensional face $[0,1]^I$ that begins at $(x^I=0, x^{\bar I}=0)$, while $g_n$ is evaluated on an $n$-dimensional face $(x^I=1, x^{\bar I}=0) + [0,1]^{\bar I}$ starting from the corner where $x^I=1$ and $x^{\bar I}=0$. 

\subsection{Integral of cochains}

Let us consider a $k$-dimensional triangulated manifold $M_k$ with a branching structure. A branching structure induces an orientation $\sigma(\Delta_k)=\pm 1$ on each $k$-simplex $\Delta_k$, see Fig.~\ref{fig:orientation} for $k=2$.
One can then integrate a $\Z_N$ $k$-cochain over a $k$-manifold,
\begin{align}
    \int_{M_k} f_k := \sum_{\Delta_k} \sigma(\Delta_k) f_k~,
\end{align}
which is valued in $\Z_N$.

\begin{figure}[htb]
    \centering
    \includegraphics[width=0.3\linewidth]{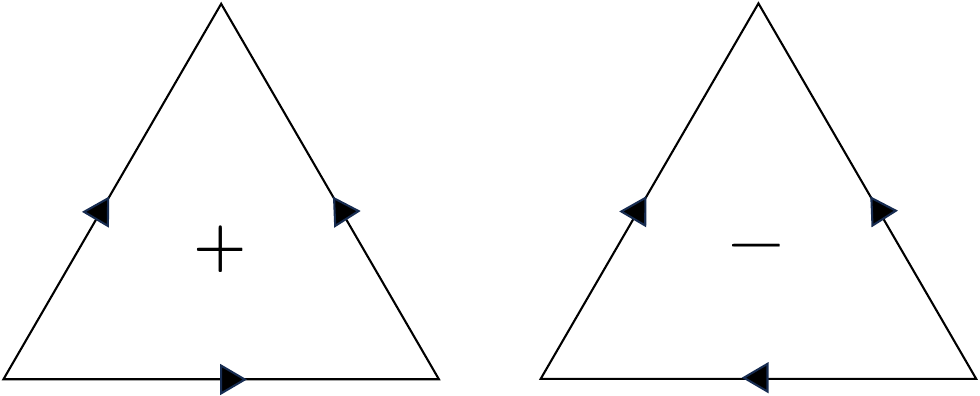}
    \caption{A branching structure introduces an orientation ($+,-$) of 2-simplices.
    }
    \label{fig:orientation}
\end{figure}

\section{Detail of Clifford Stabilizer model in 2D}
\label{app:2Dclifford}

The Clifford stabilizer model is a special case for the models constructed in \cite{Hsin2024_non-Abelian}.
On the 2D lattice, each edge has 3 qubits, whose Pauli $Z$ eigenvalues label the gauge fields $Z^{\rd}=(-1)^{a_{\rd}},Z^{\gr}=(-1)^{a_{\gr}},Z^{\bl}=(-1)^{a_{\bl}}$ with $a_{\rd},a_{\gr},a_{\bl}$ being 1-cochains with eigenvalues $0,1$.
Similarly, there are Pauli $X$ operators $X^{\rd},X^{\gr},X^{\bl}$ on each edge. The Clifford stabilizer is given by
\begin{align}
    \mathcal{S} = \left\{ \mathcal{S}_{X}^{\red{r}},\mathcal{S}_{X}^{\green{g}},\mathcal{S}_{X}^{\blue{b}},\mathcal{S}_{Z}^{\red{r}},\mathcal{S}_{Z}^{\green{g}},\mathcal{S}_{Z}^{\blue{b}}\right\}~,
\end{align}
where $\mathcal{S}_{X}^{\red{r}},\mathcal{S}_{X}^{\green{g}},\mathcal{S}_{X}^{\blue{b}}$ are supported at vertices of the lattice,
\begin{align}
    \mathcal{S}_{X}^{\red{r}} = \left\{S_{X;v}^{\red{r}}\right\}~, \quad S_{X;v}^{\red{r}} = \left(\prod_{\partial e\supset v} X^{\rd}_e\right) \prod_{e',e'':\int\tilde v\cup \tilde e'\cup \tilde e''\neq 0}CZ^{\gr,\bl}_{e',e''}~,
\end{align}
\begin{align}
    \mathcal{S}_{X}^{\gr} = \left\{S_{X;v}^{\gr}\right\}~, \quad S_{X;v}^{\gr} = \left(\prod_{\partial e\supset v} X^{\gr}_e\right) \prod_{e',e'': \int\tilde e''\cup \tilde v\cup \tilde e'\neq 0}
   CZ^{\bl,\rd}_{e,e'}~,
\end{align}
\begin{align}
    \mathcal{S}_{X}^{\bl} = \left\{S_{X;v}^{\bl}\right\}~, \quad S_{X;v}^{\bl} = \left(\prod_{\partial e\supset v} X^{\bl}_e\right)\prod_{e',e'':\int \tilde e'\cup \tilde e''\cup \tilde v\neq 0} CZ^{\rd,\gr}_{e',e''}~,
\end{align}
where $\tilde v$ is a 0-cochain that takes value 1 on vertex $v$ and 0 on other vertices, $\tilde e$ is a 1-cochain that takes value $1$ on edge $e$ and 0 on other edges. $CZ^{c,c'}_{e,e'}$ is the CZ gate for a pair of qubits with colors $c,c'$, on the edges $e,e'$, respectively. 

The stabilizers $\mathcal{S}_{Z}^{\red{r}},\mathcal{S}_{Z}^{\green{g}},\mathcal{S}_{Z}^{\blue{b}}$ are supported at faces,
\begin{align}
\mathcal{S}_{Z}^{\red{r}} = \left\{S_{Z;f}^{\red{r}}\right\}~,\quad 
    S^{\red{r}}_{Z;f} = \prod_{e\in \partial f} Z_e^{\red{r}}~,
\end{align}
\begin{align}
\mathcal{S}_{Z}^{\gr} = \left\{S_{Z;f}^{\gr}\right\}~,\quad 
    S^{\gr}_{Z;f} = \prod_{e\in \partial f} Z_e^{\gr}~,
\end{align}
\begin{align}
\mathcal{S}_{Z}^{\bl} = \left\{S_{Z;f}^{\bl}\right\}~,\quad 
    S^{\bl}_{Z;f} = \prod_{e\in \partial f} Z_e^{\bl}~.
\end{align}
As discussed in \cite{Hsin2024_non-Abelian}, the model is a non-commuting Clifford stabilizer model, i.e., $\mathcal{S}_{X}^{\red{r}},\mathcal{S}_{X}^{\green{g}},\mathcal{S}_{X}^{\blue{b}}$ are commutative within the stabilizer space of $\mathcal{S}_{Z}^{\red{r}},\mathcal{S}_{Z}^{\green{g}},\mathcal{S}_{Z}^{\blue{b}}$.

\change{As illustrated in the main text, we place the above Clifford stabilizer code on a triangle bounded by three gapped boundaries on each edge. Using the lattice $\Z_2^3$ gauge fields $a_{\rd}, a_{\gr}, a_{\bl}$, each gapped boundary $\mathcal{L}_{\rd}, \mathcal{L}_{\bl}, \mathcal{L}_{\rd\bl}$ is characterized by the boundary condition on gauge fields
\begin{align}
    \begin{split}
        \mathcal{L}_{\rd}: a_{\rd} = 0~, \quad \ \mathcal{L}_{\bl}: a_{\bl} = 0~, \quad 
        \mathcal{L}_{\rd\bl}: a_{\rd} + a_{\bl} = a_{\gr} = 0~. \\
    \end{split}
\end{align}
Each boundary condition is enforced using the $Z$ stabilizers at boundaries, and the stabilizers with boundaries are illustrated in Fig.~\ref{fig:logicalZ}. 
As discussed in the main text, this code on a triangle encodes a single logical qubit. Indeed, with the above boundary conditions, there is a single nontrivial configuration of $\Z_2^3$ gauge field on a triangle; $\int a_{\rd} = 1$ along the boundary $\mathcal{L}_{\bl}$, $\int a_{\bl} = 1$ along the boundary $\mathcal{L}_{\rd}$, $\int a_{\rd} = \int a_{\bl} = 1$ along the boundary $\mathcal{L}_{\rd\bl}$. This configuration corresponds to the logical state $\ket{\overline{1}}$, while the trivial $\Z_2^3$ gauge field corresponds to $\ket{\overline{0}}$.}

\change{The code supports a single transversal logical $\overline{Z}$ operator, which is generated by a junction of $Z$ string operators illustrated in Fig.~\ref{fig:logicalZ}.}

\begin{figure}[htb]
    \centering
    \includegraphics[width=0.8\linewidth]{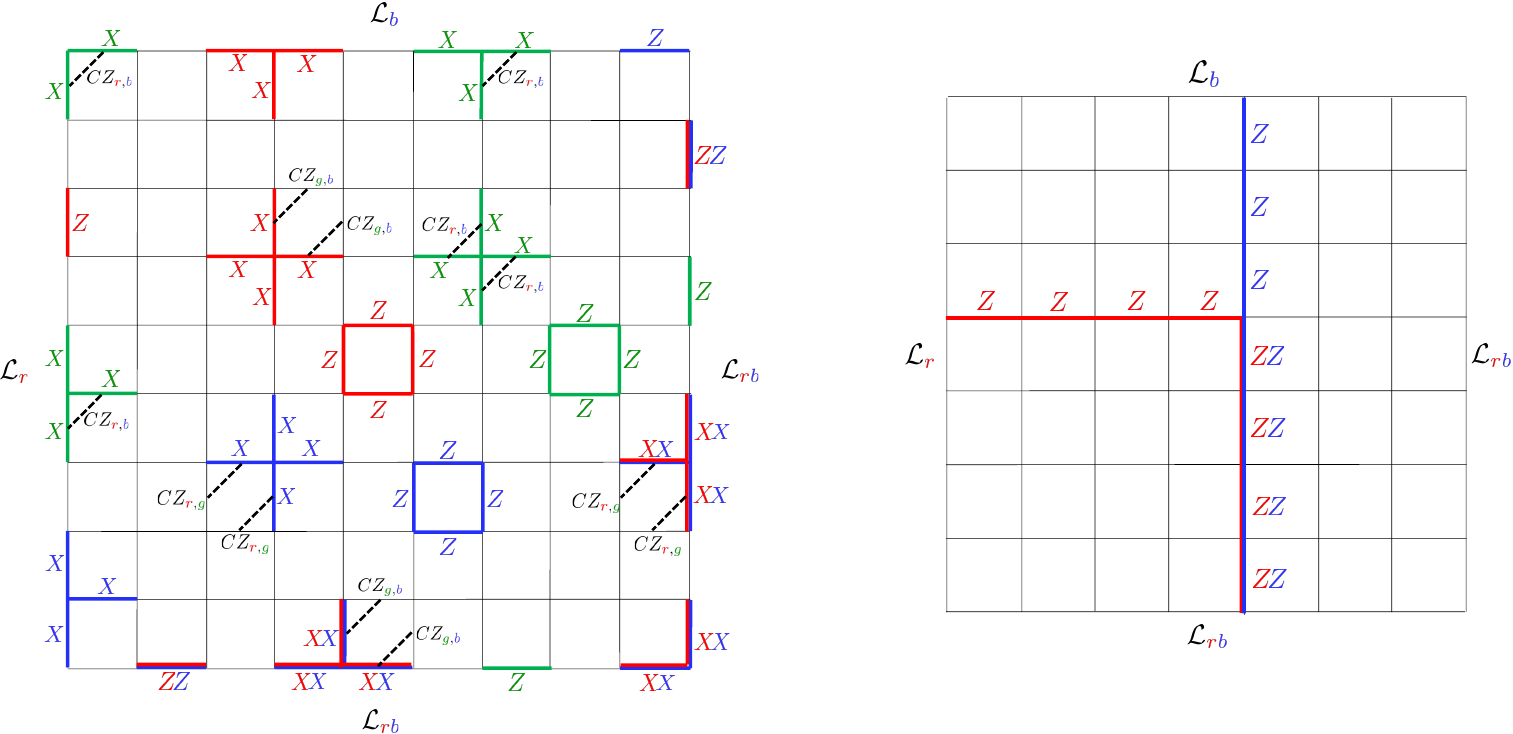}
    \caption{Left: the Clifford stabilizer code located on a square lattice with three gapped boundaries. Right: A single logical $Z$ operator of the code. 
    }
    \label{fig:logicalZ}
\end{figure}

\section{Emergent Automorphism Symmetry in 2D}
\label{app:automorphism}
The automorphism symmetry can be expressed as $U=WV$, where
\begin{align}
   V= \bigotimes_e \text{CNOT}^{(\rd,\gr)}_{e} \text{CNOT}^{(\bl,\gr)}_{e}~,
\end{align}
which transforms the Pauli operators according to the automorphism (\ref{eqn:automorphism}):
\begin{align}
\begin{split}
    & X_e^{\rd}\leftrightarrow X_e^{\rd}X_e^{\gr}, \quad X_e^{\gr}\leftrightarrow X_e^{\gr}, \quad X_e^{\bl}\leftrightarrow X_e^{\bl}X_e^{\gr}~, \\
    & Z_e^{\rd} \leftrightarrow Z_e^{\rd}, \quad Z_e^{\gr} \leftrightarrow Z_e^{\rd}Z_e^{\gr}Z_e^{\bl}, \quad Z_e^{\bl} \leftrightarrow Z_e^{\bl}~.
\end{split}
\end{align}
The automorphism symmetry is
\begin{equation}\label{eqn:automorphismsymmetryoperator app}
     U =  WV~, \quad 
     W = (-1)^{\int \frac{a_{\rd} \cup a_{\bl}}{2}}=\prod_{\Delta_{012}} (\text{CS}^{\rd,\bl}_{e_{01},e_{12}})^{\sigma(\Delta_{012})}~,
\end{equation}
where the product is over 2-simplices $\Delta_{012}$, and $\sigma(\Delta_{012})=\pm 1$ is the orientation of the 2-simplex. On a square lattice, this operator $U$ takes the form of \eqref{eqn:automorphismsymmetryoperator} in the main text.

\subsection{Automorphism symmetry in path integral}
\label{app:automorphism path integral}
One way to see the automorphism symmetry is a symmetry of the ground state code subspace is using the path integral formalism for the twisted $\mathbb{Z}_2^3$ gauge theory that describes the ground state subspace.

The path integral is
\begin{equation}\label{eqn:pathintegral1}
    Z[M]=\sum_{a_{\rd},a_{\gr},a_{\bl}\in H^1(M,\mathbb{Z}_2)} (-1)^{\int a_{\rd}\cup a_{\gr}\cup a_{\bl}}~. 
\end{equation}
where we labeled $\Z_2^3$ gauge fields by colors $\rd,\gr,\bl$ according to the main text.
Using Poincar\'e duality, we can also express the path integral in terms of membranes $m_{\rd},m_{\gr},m_{\bl}$
\begin{equation}\label{eqn:pathintegral2}
    Z[M]=\sum_{m_{\rd},m_{\gr},m_{\bl}\in H_2(M,\mathbb{Z}_2)} (-1)^{\#(m_{\rd},m_{\gr},m_{\bl})}~,
\end{equation}
where $\#(m_{\rd},m_{\gr},m_{\bl})$ is the triple intersection number of the three membranes $m_{\rd},m_{\gr},m_{\bl}$.

The automorphism symmetry can be expressed in terms of $a_{\rd},a_{\gr},a_{\bl}$ that take values in $\{0,1\}^3$ as
\begin{equation}
    (a_{\rd},a_{\gr},a_{\bl})\rightarrow (a_{\rd},a_{\rd}+a_{\gr}+a_{\bl},a_{\bl})~.
\end{equation}
Under the above transformation, the path integral (\ref{eqn:pathintegral1}) transforms into
\begin{equation}
    Z[M]\rightarrow \sum_{a_{\rd},a_{\gr},a_{\bl}\in H^1(M,\mathbb{Z}_2)} (-1)^{\int a_{\rd}\cup (a_{\rd}+a_{\gr}+a_{\bl})\cup a_{\bl}}~.
\end{equation}
In the exponent,
\begin{equation}
    a_{\rd}\cup (a_{\rd}+a_{\gr}+a_{\bl})\cup a_{\bl}=a_{\rd}\cup a_{\gr}\cup a_{\bl}+a_{\rd}\cup a_{\bl}^2+a_{\rd}^2\cup a_{\bl}~.
\end{equation}
Using the identity $a\cup a = \frac{d\tilde a}{2}$ with a $\Z_4$ lift $\tilde a$ of $a$ (valid for $\Z_2$ 1-cocycles $a$), 
we can rewrite the right hand side as
\begin{equation}
    a_{\rd}\cup a_{\gr}\cup a_{\bl}+\frac{1}{2}d\left(\tilde a_{\rd} \cup \tilde a_{\bl}\right)~.
\end{equation}
In other words, the transformation leaves the exponent invariant up to a total derivative. Since the total derivative integrates to zero for closed $M$, the path integral transforms as
\begin{equation}\label{eqn:transformedZ}
    Z[M]\rightarrow \sum_{a_1,a_2,a_3\in H^1(M,\mathbb{Z}_2)} (-1)^{\int a_{\rd}\cup a_{\gr}\cup a_{\bl}}\cdot  i^{\int_M d\left(\tilde a_{\rd} \cup \tilde a_{\bl}\right)}=Z[M]~,
\end{equation}
where we used $\int_M d\left(\tilde a_{\rd} \cup \tilde a_{\bl}\right)=0$ for closed $M$.
Thus we conclude that the path integral is invariant under the automorphism symmetry.

Moreover, when $M$ is inserted with domain wall $D$ of the symmetry, i.e. we only perform transformation on half spacetime $[0,\infty)_t$ with boundary $D$ at $t=0$, the total derivative in (\ref{eqn:transformedZ}) picks up the following contribution in the path integral
\begin{equation}\label{eqn:gaugedSPTsymm}
  W =  i^{\int_D \tilde a_{\rd} \cup \tilde a_{\bl}}~.
\end{equation}
In other words, the automorphism symmetry inserts the gauged SPT operator (\ref{eqn:gaugedSPTsymm}) in the path integral. The gauged SPT operator by itself is not topological, and it is only topological and generates symmetry when accompanied by the automorphism transformation. The operator \eqref{eqn:gaugedSPTsymm} corresponds to the additional operator $W$ in \eqref{eqn:automorphismsymmetryoperator app} for the automorphism symmetry. See \cite{automorphism2025} for a general discussion of automorphism symmetry in twisted gauge theories, where it is shown that the automorphism symmetry can become extended or a higher-group/non-invertible symmetry.

\subsubsection{Automorphism permutes topological excitations}
\label{app:permutation}
Here we will study how the automorphism symmetry $U$ acts on the particle excitations, i.e. detectable errors in the Clifford stabilizer codes.
We will present two derivations: one uses how stabilizers are permuted under the symmetry, the other uses path integral formalism of the ground state subspace.

The electric charge excitations $e_{\rd},e_{\gr},e_{\bl}$ violate the $X$ stabilizers $S^{\rd}_X,S^{\gr}_X,S^{\bl}_X$ while the magnetic charge excitations $m_{\rd},m_{\gr},m_{\bl}$ violate the $Z$ stabilizers $S^{\rd}_Z,S^{\gr}_Z,S^{\bl}_Z$. When there are more than one stabilizers that are violated, the excitations have subscript with the corresponding colors, such as $e_{\rd\gr}$ being the violation of both $S^{\rd}_X,S^{\gr}_X$.
By checking how the stabilizers are permuted under the symmetry, we can track how the excitations are permuted. From the transformation of the stabilizers (\ref{eqn:permutingstabilizers}), we conclude the mapping of excitations under the automorphism symmetry:
\begin{align}
    &m_{\rd}\leftrightarrow m_{\rd \gr},\; m_{\gr}\leftrightarrow m_{\gr}, \; m_{\bl}\leftrightarrow m_{\gr\bl}, \; m_{\rd\bl}\leftrightarrow m_{\rd\bl}~,\cr
    &e_{\rd}\leftrightarrow e_{\rd},\quad e_{\gr}\leftrightarrow e_{\rd\gr\bl},\quad  e_{\bl}\leftrightarrow e_{\bl}~.
    \label{eq:anyonpermutationapp}
\end{align}
For example, violation of $S^{\gr}_X$ corresponds to the violation of all $S^{\rd}_X,S^{\gr}_X,S^{\bl}_X$ under the automorphism symmetry, and thus the excitation $e_{\gr}$ corresponds to $e_{\rd\gr\bl}$ under the symmetry.

An alternative derivation uses the path integral of continuum field theory that describes the ground states, where there are states related by closed loop operators that create and annihilate a pair of electric and/or magnetic charge excitations. 
The field theory is described by the topological action
\begin{align}
    \pi\int a_{\rd}\cup a_{\gr} \cup a_{\bl}~,
\end{align}
where $a_{\rd},a_{\gr},a_{\bl}=0,1$ are $\Z_2$ gauge fields correspond to the $\rd,\gr,\bl$ types of qubits. One can embed these $\Z_2$ gauge fields into $U(1)$ as
\begin{equation}
    a_{\rd}\rightarrow a_{\rd}/\pi,\quad 
    a_{\gr}\rightarrow a_{\gr}/\pi,\quad a_{\bl}\rightarrow a_{\bl}/\pi~,
\end{equation}
where the gauge fields $a_{\rd},a_{\gr},a_{\bl}$ now have holonomies in $0,\pi$ instead of $0,1$ because of the rescaling.
We can introduce Lagrangian multipliers $\tilde a_{\rd},\tilde a_{\gr},\tilde a_{\bl}$ to enforce these constraints on the holonomies via a BF type coupling \cite{Horowitz:1989ng,Maldacena:2001ss,Kapustin:2014gua}.
In terms of the rescaled gauge fields and the Lagrangian multipliers, the theory can be expressed as
\begin{equation}\label{eqn:U(1)norm}
    \frac{1}{\pi^2}\int a_{\rd} a_{\gr} a_{\bl}+\frac{2}{2\pi}\int (a_{\rd} d{b}_{\rd} + a_{\gr} d{b}_{\gr} +  a_{\bl} d{b}_{\bl})~.
\end{equation}
In the theory, the gauge invariant operators $e^{i\int a_{\rd}},e^{i\int a_{\gr}},e^{i\int a_{\bl}}$ on closed loops create and annihilate a pair of electric charges $e_{\rd},e_{\gr},e_{\bl}$, and similarly 
the operators $e^{i\int {b}_{\rd}},e^{i\int {b}_{\gr}},e^{i\int {b}_{\bl}}$ create and annihilate a pair of magnetic charges $m_{\rd},m_{\gr},m_{\bl}$.
We can then identify the permutation (\ref{eq:anyonpermutationapp}) as the automorphism of $\Z_2^3$ gauge group which transforms the gauge fields as
\begin{align}
    a_{\rd}\to a_{\rd}, \quad a_{\gr}\to a_{\rd}+a_{\gr}+a_{\bl}, \quad a_{\bl}\to a_{\bl}~,
\end{align}
and
\begin{align}
    {b}_{\rd} \to {b}_{\rd} + {b}_{\gr}, \quad {b}_{\gr} \to {b}_{\gr}, \quad {b}_{\bl} \to {b}_{\bl}+{b}_{\gr}~.
\end{align}
Thus we find that the transformation of the gauge fields indeed corresponds to the permutation of topological excitations in (\ref{eq:anyonpermutationapp}).

\subsubsection{Generalization to higher dimensions}

The model can be generalized to $(N-1)$ spatial dimensions, where the gauge theory is a twisted $\mathbb{Z}_2^N$ gauge theory, with the path integral
\begin{equation}\label{eqn:pathintegral}
    Z[M]=\sum_{a_1,a_2,\cdots, a_N\in H^1(M,\mathbb{Z}_2)} (-1)^{\int a_1\cup a_2\cup \cdots\cup a_N}~,
\end{equation}
where $M$ is the spacetime manifold. We have omitted an overall normalization that removes the gauge transformations. The one-cocycles (gauge fields) $a_1,\cdots, a_N$ are $\mathbb{Z}_2^N$ valued. 
We can rewrite the path integral using Poincar\'e duality, where the one-cocycles are replaced by $(N-1)$-cycles $m_1,m_2,\cdots,m_N$ that take value in $\mathbb{Z}_2^N$:
\begin{equation}\label{eqn:pathintegral'}
    Z[M]=\sum_{m_1,m_2,\cdots, m_N\in H_{N-1}(M,\mathbb{Z}_2)} (-1)^{\#\left(m_1,\cdots,m_N\right)}~,
\end{equation}
where $\#\left(m_1,\cdots,m_N\right)$ is the intersection number of the $(N-1)$-cycles $m_1,\cdots,m_N$.

Consider the automorphism of $\mathbb{Z}_2^N$ gauge group that acts on the gauge fields as
\begin{equation}
a_1\rightarrow a_1+a_2+a_3+a_4+\cdots+a_N~.    
\end{equation}
The weight of the path integral changes as
\begin{equation}
    \omega=\pi a_1\cup a_2\cup\cdots \cup a_N\rightarrow \omega+d\alpha~,
\end{equation}
where
\begin{equation}
    \alpha=\frac{\pi}{2}\left(a_2\cup a_3\cup \cdots a_N\right)+\pi\sum_{i=3}^N\sum_{j=2}^{i-1}a_2\cup\cdots a_{j-1}\cup\left(a_i\cup_1 a_j\right)\cup a_{j+1}\cdots\cup a_i\cdots \cup a_N~.
\end{equation}
In the summation, the first few terms are $\pi (a_3\cup_1 a_2)\cup a_3\cup\cdots a_N+\pi (a_4\cup_1 a_2)\cup a_3\cup a_4\cdots a_N+\pi a_2\cup (a_4\cup_1 a_3)\cup a_4\cdots a_N$.
Therefore, the automorphism symmetry is generated by the operator in the form of $U=WV$, where $V$ is a transversal CNOT gate inducing the permutation of gauge fields, and and $W=e^{i\int\alpha}$ with the integral over the whole $N$-dimensional space.

\section{Details of logical T gates in 2D}
\label{app:Tgate}

\subsection{Proof of Theorem 1: emergent automorphism symmetry in Clifford stabilizer}
\label{sec:proof of theorem 1}

Here we show the following Theorem 1 presented in the main text:
\begin{quote}
{\bf Theorem 1.} 
The unitary operator (\ref{eqn:automorphismsymmetryoperator}) preserves the logical subspace and thus is an emergent symmetry of the Clifford stabilizer model.
\end{quote}

\begin{proof}
   First, let us check that $U$ preserves the $Z$-stabilizer subspace of $\mathcal{S}_{Z}^{\red{r}}, \mathcal{S}_{Z}^{\green{g}},\mathcal{S}_{Z}^{\blue{b}}$.
   Conjugating by $U$ acts on the $Z$-stabilizers as
   \begin{align}
       \mathcal{S}_{Z}^{\rd} \to \mathcal{S}_{Z}^{\rd}~, \quad \mathcal{S}_{Z}^{\gr} \to \mathcal{S}_{Z}^{\rd}\mathcal{S}_{Z}^{\gr}~, \quad \mathcal{S}_{Z}^{\bl} \to \mathcal{S}_{Z}^{\bl}~,
\end{align}
therefore preserves the $Z$-stabilizer subspace.

The remaining task is to show that $U$ induces the automorphism of $X$-stabilizers $\mathcal{S}_{X}^{\red{r}},\mathcal{S}_{X}^{\green{g}},\mathcal{S}_{X}^{\blue{b}}$ within the $Z$-stabilizer subspace, therefore preserves the logical subspace.
From now, let us restrict to the $Z$-stabilizer subspace $\mathcal{S}_{Z}^{\red{r}}= \mathcal{S}_{Z}^{\green{g}}=\mathcal{S}_{Z}^{\blue{b}}=1$. Using the $\Z_2$ gauge fields $Z^{\red{r}}=(-1)^{a_{\red{r}}},Z^{\green{g}}=(-1)^{a_{\green{g}}},Z^{\blue{b}}=(-1)^{a_{\blue{b}}}$, the $Z$-stabilizer states are characterized by those satisfying $da_{\rd}=da_{\gr}=da_{\bl}=0$ mod 2, i.e., states with flat $\Z_2^3$ gauge fields.

Now $V$ transforms the $X$-stabilizers as
\begin{align}
    \begin{split}
        V S_{X;v}^{\rd} V^\dagger &=  \left(\prod_{v\subset \partial e} X_e^{\rd}X_e^{\gr}\right)(-1)^{\int \hat v\cup (a_{\rd}+a_{\gr}+a_{\bl})\cup a_{\bl}}~, \\
        V S_{X;v}^{\gr} V^\dagger &=\left(\prod_{v\subset \partial e} X^{\gr}_{e}\right)(-1)^{\int a_{\rd}\cup \hat v\cup  a_{\bl}}~, \\
        V S_{X;v}^{\bl} V^\dagger &=  \left(\prod_{v\subset \partial e} X_e^{\gr}X_e^{\bl}\right)(-1)^{\int a_{\rd}\cup(a_{\rd}+a_{\gr}+a_{\bl})\cup \hat v}~. \\
    \end{split}
\end{align}
With $da_{\rd}=da_{\gr}=da_{\bl}=0$ in mind, the commutator between the Pauli $X$ terms and $W$ is given by   
\begin{align}
\begin{split}
&\left(\prod_{v\subset \partial e} X_e^{\rd}\right)W \left(\prod_{v\subset \partial e} X_e^{\rd}\right)W^\dagger \\
&\quad = \exp\left( \pi i \int \frac{(\widetilde{a_{\rd}+d\hat{v}})\cup \tilde a_{\bl}}{2} -\frac{\tilde a_{\rd}\cup \tilde a_{\bl}}{2}\right)
= \exp\left(\pi i\int \frac{\widetilde{d\hat{v}}\cup \tilde a_{\bl}}{2} + (d\hat v\cup_1 a_{\rd})\cup a_{\bl}\right) \\
&\quad = \exp\left(\pi i\int \frac{\widetilde{d\hat{v}}\cup \tilde a_{\bl}}{2} + d\hat v\cup(a_{\bl}\cup_1 a_{\rd})+ (d\hat{v}\cup a_{\bl})\cup_1 a_{\rd}\right)  
\\
&\quad 
= \exp\left(\pi i\int \frac{d\tilde{\hat{v}}\cup \tilde a_{\bl}}{2} + \hat{v}\cup d\hat{v}\cup a_{\bl} + d\hat v\cup(a_{\bl}\cup_1 a_{\rd})+ (d\hat{v}\cup a_{\bl})\cup_1 a_{\rd}\right)  \\
&\quad = \exp\left(\pi i\int \hat v\cup a_{\bl}\cup a_{\bl} + \hat{v}\cup d\hat{v}\cup a_{\bl} + \hat v\cup(a_{\bl}\cup a_{\rd}+a_{\rd}\cup a_{\bl})+ d(\hat{v}\cup a_{\bl})\cup_1 a_{\rd}\right) \\
&\quad = \exp\left(\pi i\int \hat v\cup a_{\bl}\cup a_{\bl} + \hat{v}\cup d\hat{v}\cup a_{\bl} + \hat v\cup a_{\rd}\cup a_{\bl}+ a_{\rd}\cup \hat{v}\cup a_{\bl}\right)~, \\
&\left(\prod_{v\subset \partial e} X_e^{\gr}\right)W \left(\prod_{v\subset \partial e} X_e^{\gr}\right)W^\dagger  = 1~, \\
&\left(\prod_{v\subset \partial e} X_e^{\bl}\right)W \left(\prod_{v\subset \partial e} X_e^{\bl}\right)W^\dagger \cr &\quad = \exp\left(\pi i\int \frac{\tilde{a}_{\rd}\cup (\widetilde{a_{\bl}+d\hat{v}}) }{2} -\frac{\tilde a_{\rd}\cup \tilde a_{\bl}}{2}\right) 
= \exp\left(\pi i\int \frac{\tilde a_{\rd}\cup\widetilde{d\hat{v}}}{2} + a_{\rd}\cup (d\hat v\cup_1 a_{\bl})\right) \\
&\quad = \exp\left(\pi i\int \frac{\tilde a_{\rd}\cup\widetilde{d\hat{v}}}{2}  + (a_{\rd}\cup_1 a_{\bl})\cup d\hat v + (a_{\rd}\cup d\hat v )\cup_1 a_{\bl}\right) 
\\
&\quad
= \exp\left(\pi i\int \frac{\tilde a_{\rd}\cup d\tilde{\hat{v}}}{2}  + a_{\rd}\cup \hat{v}\cup d\hat{v} + (a_{\rd}\cup_1 a_{\bl})\cup d\hat v + (a_{\rd}\cup d\hat v )\cup_1 a_{\bl}\right)  \\
&\quad = \exp\left(\pi i\int a_{\rd}\cup a_{\rd} \cup \hat{v}  + a_{\rd}\cup \hat{v}\cup d\hat{v}+ (a_{\rd}\cup a_{\bl}+a_{\bl}\cup a_{\rd})\cup \hat v + d(a_{\rd}\cup \hat v )\cup_1 a_{\bl}\right)  \\
&\quad = \exp\left(\pi i\int a_{\rd}\cup a_{\rd} \cup \hat{v}  + a_{\rd}\cup d\hat{v}\cup\hat{v}+ a_{\rd}\cup a_{\bl}\cup \hat v + a_{\rd}\cup \hat v \cup a_{\bl}\right)~, \\
\end{split}
\end{align}
where we used the Hirsch identity $(a\cup b)\cup_1 c = a\cup(b\cup_1 c)+ (a\cup_1 c)\cup b$ for $\Z_2$ 1-cochains. We also used $\widetilde{d\hat v} = d\tilde{\hat v} + 2\hat{v}\cup d\hat{v}$ mod 4,
which can be explicitly verified using the definition of $\tilde v$ as follows: for an edge $(01)$ with $\tilde v(0),\tilde v(1)$ taking the possible values $(0,0),(0,1),(1,0)$, the left hand side is $\widetilde{d\hat v}=0,1,1$ for the 3 cases respectively, while the right hand side $d\tilde{\hat v} + 2\hat{v}\cup d\hat{v}$ equals $0+0=0,1+0=1,-1+2\cdot(-1)=-3$ respectively, in agreement with the left hand side mod 4.

Using the above actions of $W$, $U=WV$ transforms the stabilizers by
\begin{align}\label{eqn:permutingstabilizers}
    \begin{split}
        WV S_{X;v}^{\rd} V^\dagger W^\dagger &=   \left(\prod_{v\subset \partial e} X_e^{\rd}X_e^{\gr}\right)(-1)^{\int \hat v\cup (a_{\gr}+d\hat{v})\cup a_{\bl} + a_{\rd} \cup \hat{v}\cup a_{\bl}} = S_{X;v}^{\rd}S_{X;v}^{\gr}~, \\
        WV S_{X;v}^{\gr} V^\dagger W^\dagger &=\left(\prod_{v\subset \partial e} X^{\gr}_{e}\right)(-1)^{\int a_{\rd}\cup \hat v\cup  a_{\bl}} =S_{X;v}^{\gr}~, \\
        WV S_{X;v}^{\bl} V^\dagger W^\dagger &=  \left(\prod_{v\subset \partial e} X_e^{\bl}X_e^{\gr}\right)(-1)^{\int a_{\rd}\cup (a_{\gr}+d\hat{v})\cup \hat v + a_{\rd} \cup \hat{v}\cup a_{\bl}}= S_{X;v}^{\gr}S_{X;v}^{\bl}~. \\
    \end{split}
\end{align}
This implies that $U=VW$ induces an automorphism of stabilizers,  therefore generates a logical gate.

\end{proof}

\subsection{proof of Theorem 2: Automorphism symmetry with boundaries}
\label{app:Tgateboundary}
\subsubsection{Automorphism symmetry with boundaries}
\label{app:hinge}
As described in Sec.~\ref{app:automorphism}, the automorphism symmetry of the stabilizer code in $N$ spatial dimensions has the form of $U=WV$ with $W$ given by an integral of a $N$-cochain, in the form of
\begin{align}
    W=e^{i\int_{\text{space}}\alpha}
\end{align}
with $\alpha$ a $N$-cochain expressed by gauge fields. In the presence of gapped boundaries, the definition of $W$ has to be properly modified to behave as a logical gate. Such boundary modifications of the logical gates given by integral of cochains are discussed in \cite{Hsin2024:classifying}. Here we describe the constructions of such logical gates in the presence of boundaries following \cite{Hsin2024:classifying}.

At the gapped boundary, the $\Z_2$ gauge fields $a_1,\dots, a_N$ are subject to specific boundary conditions. To construct a logical gate in the presence of the boundary, we require that the cochain $\alpha$ in the bulk trivializes at the boundary as
\begin{align}
    \alpha|_{\text{bdry}} = d\beta|_{\text{bdry}}~,
\end{align}
with a $(N-1)$-cochain $\beta$ at the boundary. The logical gate of the bulk-boundary system is then given by $U=WV$ with
\begin{align}
    W= \exp\left(i\int_{\text{bulk}}\alpha - i\int_{\text{bdry}}\beta\right)~.
\end{align}
Such a construction of logical gates is valid even when the bulk is bounded by multiple gapped boundary conditions, and the distinct boundary conditions are separated by a $(N-2)$-dimensional hinge. For instance, let us consider a setup where two gapped boundaries meet at a hinge, as shown in Fig.~\ref{fig:hinge}.
At the two boundaries (labeled by bdry, bdry' respectively) we have the boundary actions $\beta, \beta'$. The hinge then needs to trivialize the boundary actions when restricted,
\begin{align}
    (\beta - \beta')|_{\text{hinge}} = d\gamma|_{\text{hinge}}~,
\end{align}
with a $(N-2)$-cochain $\gamma$. The logical operator of the whole system is then given by $U=WV$ with
\begin{align}
    W= \exp\left(i\int_{\text{bulk}}\alpha - i\int_{\text{bdry}}\beta -i\int_{\text{bdry'}}\beta' + i\int_{\text{hinge}}\gamma\right)~.
    \label{eq:hingeaction}
\end{align}

\begin{figure}[htb]
    \centering
    \includegraphics[width=0.4\linewidth]{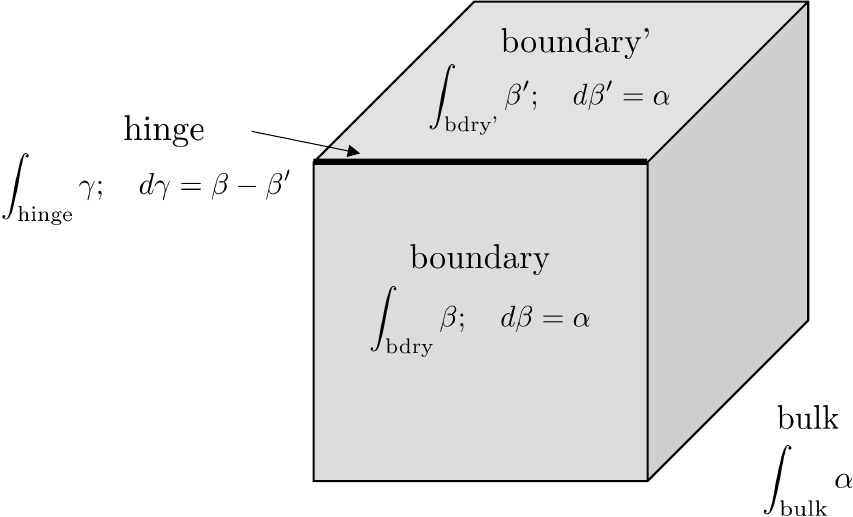}
    \caption{The operator $W$ is given by an integral of a cochain $\alpha$ in the bulk. There are a pair of gapped boundaries which trivializes the bulk cochain by $d\beta=\alpha, d\beta'=\alpha$ respectively. There is also a hinge between boundaries that trivializes the boundary action by $d\gamma=\beta-\beta'$. The logical operator in the whole space is given by $U=WV$ with $W$ given by \eqref{eq:hingeaction}.
    }
    \label{fig:hinge}
\end{figure}

\subsubsection{Proof of Theorem 2}

Now we show the following Theorem 2 in the main text:
\begin{quote}
{\bf Theorem 2.}
    The automorphism symmetry $U$ in the Clifford stabilizer code with the boundary condition implements logical T gate.
\end{quote}

\begin{proof}
To show this, let us evaluate the operator $W$ in the logical gate $U=WV$ in the presence of the boundary conditions. This can be achieved by modifying the operator $W$ at boundaries of a triangle (Fig.~\ref{fig:triangleboundary}) as reviewed above; the bulk 2-cochain $\alpha=\frac{\pi}{2} \tilde a_{\rd} \cup \tilde a_{\bl}$ gets trivialized under the boundary condition $\mathcal{L}_{\rd\bl}$,
where $a=a_{\rd}|=a_{\bl}|$ is the common value at the boundary condition. At this boundary $\mathcal{L}_{\rd\bl}$, we get $\alpha=d\beta$ with $\beta=\frac{\pi}{4}\tilde a$. 
This boundary condition $a_{\rd}|=a_{\bl}|$ arises due to the composite $e_{\rd\bl}$ condensation on the boundary. Such constraint on gauge fields is enforced by the boundary stabilizers $Z^{\rd}Z^{\bl}$ at the boundary ${\cal L}_{\rd\bl}$ (see Fig.~\ref{fig:trianglehamiltonian}).

The automorphism $U$ on the code is given by $U=WV$ with 
\begin{equation}
    W= \exp\left(\pi i \int_{\text{bulk}} \frac{\tilde a_{\rd} \cup \tilde a_{\bl}}{2} - \pi i \int_{\text{bdry}_{\rd\bl}}\frac{\tilde a_{\rd}}{4}\right)~,
\end{equation}
where we use tilde to denote the lift of $\mathbb{Z}_2$ to $\mathbb{Z}$ as the values $\{0,1\}$. 
To explicitly verify that $U$ preserves the code subspace, we just have to check the action of $U$ on boundary $X$-stabilizer, which is a combination $S^{\rd}_X S^{\bl}_X$ on boundary vertices (see Fig.~\ref{fig:trianglehamiltonian} in the main text); the other stabilizers in the bulk have been checked to be preserved in Sec.~\ref{sec:proof of theorem 1}.

In the computation below, suppose that $v$ is a vertex at the boundary $\mathcal{L}_{\rd\bl}$. First, $V$ transforms the stabilizer as
\begin{align}
    \begin{split}
        V S_{X;v}^{\rd}S_{X;v}^{\bl}V^\dagger &=  \left(\prod_{v\subset \partial e} X_e^{\rd}X_e^{\bl}\right)(-1)^{\int \hat v\cup (a_{\rd}+a_{\gr}+a_{\bl})\cup a_{\bl} + a_{\rd}\cup(a_{\rd}+a_{\gr}+a_{\bl})\cup \hat v} \\
        &= S_{X;v}^{\rd}S_{X;v}^{\bl}(-1)^{\int \hat v\cup (a_{\rd}+a_{\bl})\cup a_{\bl} + a_{\rd}\cup(a_{\rd}+a_{\bl})\cup \hat v}~.
        \label{commutation with V boundary}
    \end{split}
\end{align}
$W$ acts on the stabilizer through the commutation with $X$-terms, which we will evaluate below. During the computations we use the following relations (some of them are only valid within the $Z$-stabilizer subspace):
\begin{align}
    \begin{split}
    \widetilde{a+d\hat v} &=  \tilde a + \widetilde{d\hat{v}}-2\tilde a\cup_1\widetilde{d\hat v}~,\\
        \widetilde{d\hat{v}} &= d\tilde{\hat{v}} - 2\tilde{\hat{v}}\cup d\tilde{\hat{v}}~, \\
        \frac{d\tilde a}{2} &= a\cup a \quad \mod 2~, \\
        a_{\rd} &= a_{\bl} \quad \text{at the boundary,} \\
        (a\cup b)\cup_1 c &= a\cup(b\cup_1 c)+ (a\cup_1 c)\cup b \quad \text{for $\Z_2$ 1-cochains,} \\
        -\tilde a \tilde{\hat v} + \tilde{\hat v}\tilde a + \tilde a \cup_1 d\tilde{\hat v} &= 0~, \\
        \hat{v} ad\hat v &= d\hat{v}a\hat v = 0~, \\
        \hat{v}((a_{\bl}+d\hat{v})\cup_1 a_{\rd}) &= (\hat v\cup(a_{\bl}+d\hat v))\cup_1 a_{\rd}~,\\
        ((a_{\rd}+d\hat{v})\cup_1 a_{\bl})\hat{v} &= ((a_{\rd}+d\hat v)\cup \hat{v})\cup_1 a_{\bl} ~, \\   
        \hat{v}a_{\rd} &= a_{\rd}\cup_1({\hat{v}}d{\hat v})\quad  \mod 2~.
    \end{split}
\end{align}
Now we have
\begin{align}
    \begin{split}
        &\left(\prod_{v\subset \partial e} X_e^{\rd}X_e^{\bl}\right)W \left(\prod_{v\subset \partial e} X_e^{\rd}X_e^{\bl}\right)W^\dagger \\
        &= \exp\left(\pi i \int_{\text{bulk}} \frac{(\widetilde{ a_{\rd}+d\hat{v}})(\widetilde{ a_{\bl}+d\hat{v}}) - \tilde{a}_{\rd} \tilde{a}_{\bl}}{2}\right)
        \exp\left(-\pi i\int_{\text{bdry}}\frac{(\widetilde{a_{\rd}+d\hat{v}})-\tilde{a}_{\rd}}{4}\right) \\
        &= \exp\left(\pi i \int_{\text{bulk}} \frac{\tilde a_{\rd}d\tilde{\hat v}+d\tilde{\hat v}\tilde a_{\bl} +d\tilde{\hat v}d\tilde{\hat v}}{2} + (a_{\rd}\cup_1d\hat{v}+\hat{v}d\hat{v})(a_{\bl}+d\hat{v})+(a_{\rd}+d\hat{v})(a_{\bl}\cup_1d\hat{v}+\hat{v}d\hat{v})
    \right) \\
    &\quad \times \exp\left(\pi i \int_{\text{bdry}}\frac{-d\tilde{\hat{v}} +2\tilde{\hat{v}}d\tilde{\hat v}+2\tilde a_{\rd}\cup_1(d\tilde{\hat{v}}-2\tilde{\hat{v}}d\tilde{\hat v}) }{4} \right) \\
    &= \exp\left(\pi i \int_{\text{bulk}} a_{\rd}a_{\rd}{\hat v}+{\hat v}a_{\bl}a_{\bl} + (a_{\rd}\cup_1d\hat{v}+\hat{v}d\hat{v})(a_{\bl}+d\hat{v})+(a_{\rd}+d\hat{v})(a_{\bl}\cup_1d\hat{v}+\hat{v}d\hat{v})
    \right) \\
    &\quad \times \exp\left(\pi i \int_{\text{bdry}}\frac{-\tilde a_{\rd}\tilde{\hat v} + \tilde{\hat{v}}\tilde{a}_{\bl} +2\tilde{\hat{v}}d\tilde{\hat v}+\tilde a_{\rd}\cup_1(d\tilde{\hat{v}}-2\tilde{\hat{v}}d\tilde{\hat v}) }{2} \right) \\
     &= \exp\left(\pi i \int_{\text{bulk}} a_{\rd}a_{\rd}{\hat v}+{\hat v}a_{\bl}a_{\bl}  + \hat vd\hat v (a_{\bl}+d\hat{v}) + d\hat v ((a_{\bl}+d\hat{v})\cup_1 a_{\rd} )+ (d\hat v\cup (a_{\bl}+d\hat{v}))\cup_1 a_{\rd}
    \right) \\
    &\quad \times  \exp\left(\pi i \int_{\text{bulk}} (a_{\rd}+d\hat{v})\hat vd\hat v + ((a_{\rd}+d\hat{v})\cup_1 a_{\bl} ) d\hat v + ((a_{\rd}+d\hat{v})\cup d\hat{v}) \cup_1 a_{\bl}   \right) \\
    &\quad \times \exp\left(\pi i \int_{\text{bdry}}\frac{-\tilde a_{\rd}\tilde{\hat v} + \tilde{\hat{v}}\tilde{a}_{\bl} +2\tilde{\hat{v}}d\tilde{\hat v}+\tilde a_{\rd}\cup_1(d\tilde{\hat{v}}-2\tilde{\hat{v}}d\tilde{\hat v}) }{2} \right) \\
    &= \exp\left(\pi i \int_{\text{bulk}} a_{\rd}a_{\rd}{\hat v}+{\hat v}a_{\bl}a_{\bl}  + \hat vd\hat v (a_{\bl}+d\hat{v}) + \hat v ((a_{\bl}+d\hat{v}) a_{\rd} +a_{\rd}(a_{\bl}+d\hat{v}))+ (d\hat v\cup (a_{\bl}+d\hat{v}))\cup_1 a_{\rd}
    \right) \\
    &\quad \times  \exp\left(\pi i \int_{\text{bulk}} (a_{\rd}+d\hat{v})\hat vd\hat v + ((a_{\rd}+d\hat{v}) a_{\bl}+a_{\bl}(a_{\rd}+d\hat{v}) ) \hat v + ((a_{\rd}+d\hat{v})\cup d\hat{v}) \cup_1 a_{\bl}   \right) \\
    &\quad \times \exp\left(\pi i \int_{\text{bdry}}\frac{-\tilde a_{\rd}\tilde{\hat v} + \tilde{\hat{v}}\tilde{a}_{\bl} +2\tilde{\hat{v}}d\tilde{\hat v}+\tilde a_{\rd}\cup_1(d\tilde{\hat{v}}-2\tilde{\hat{v}}d\tilde{\hat v}) }{2} + \hat{v}((a_{\bl}+d\hat{v})\cup_1 a_{\rd}) +((a_{\rd}+d\hat{v})\cup_1 a_{\bl})\hat{v}\right) \\
    &= \exp\left(\pi i \int_{\text{bulk}} a_{\rd}a_{\rd}{\hat v}+{\hat v}a_{\bl}a_{\bl}  + \hat vd\hat v (a_{\bl}+d\hat{v}) + \hat v ((a_{\bl}+d\hat{v}) a_{\rd} +a_{\rd}(a_{\bl}+d\hat{v}))+ (d\hat v\cup (a_{\bl}+d\hat{v}))\cup_1 a_{\rd}
    \right) \\
    &\quad \times  \exp\left(\pi i \int_{\text{bulk}} (a_{\rd}+d\hat{v})\hat vd\hat v + ((a_{\rd}+d\hat{v}) a_{\bl}+a_{\bl}(a_{\rd}+d\hat{v}) ) \hat v + ((a_{\rd}+d\hat{v})\cup d\hat{v}) \cup_1 a_{\bl}   \right) \\
    &\quad \times \exp\left(\pi i \int_{\text{bdry}}\tilde{\hat{v}}d\tilde{\hat v}+\tilde a_{\rd}\cup_1(\tilde{\hat{v}}d\tilde{\hat v})  + \hat{v}((a_{\bl}+d\hat{v})\cup_1 a_{\rd}) +((a_{\rd}+d\hat{v})\cup_1 a_{\bl})\hat{v}\right) \\
     &= \exp\left(\pi i \int_{\text{bulk}} a_{\rd}a_{\rd}{\hat v}+{\hat v}a_{\bl}a_{\bl}  + \hat vd\hat v (a_{\bl}+d\hat{v}) + \hat v ((a_{\bl}+d\hat{v}) a_{\rd} +a_{\rd}(a_{\bl}+d\hat{v}))+ (\hat v (a_{\bl}+d\hat{v}))a_{\rd} +a_{\rd} (\hat v (a_{\bl}+d\hat{v}))   \right) \\
    &\quad \times  \exp\left(\pi i \int_{\text{bulk}} (a_{\rd}+d\hat{v})\hat vd\hat v + ((a_{\rd}+d\hat{v}) a_{\bl}+a_{\bl}(a_{\rd}+d\hat{v}) ) \hat v + ((a_{\rd}+d\hat{v})\hat{v}) a_{\bl} + a_{\bl}((a_{\rd}+d\hat{v})\hat{v})\right) \\
    &\quad \times \exp\left(\pi i \int_{\text{bdry}}\tilde{\hat{v}}d\tilde{\hat v}+\tilde a_{\rd}\cup_1(\tilde{\hat{v}}d\tilde{\hat v})  + \hat{v}((a_{\bl}+d\hat{v})\cup_1 a_{\rd}) +((a_{\rd}+d\hat{v})\cup_1 a_{\bl})\hat{v} + (\hat v\cup(a_{\bl}+d\hat v))\cup_1 a_{\rd} + ((a_{\rd}+d\hat v)\cup \hat{v})\cup_1 a_{\bl}  \right) \\
     &= \exp\left(\pi i \int_{\text{bulk}} a_{\rd}a_{\rd}{\hat v}+{\hat v}a_{\bl}a_{\bl}  + \hat vd\hat v (a_{\bl}+d\hat{v}) + \hat v a_{\rd}(a_{\bl}+d\hat{v})+ d\hat{v}\hat vd\hat v + (a_{\rd}+d\hat{v}) a_{\bl}) \hat v + d\hat{v}\hat{v} a_{\bl}\right) \\
    &\quad \times \exp\left(\pi i \int_{\text{bdry}}\tilde{\hat{v}}d\tilde{\hat v}+\tilde a_{\rd}\cup_1(\tilde{\hat{v}}d\tilde{\hat v})  + \hat{v}((a_{\bl}+d\hat{v})\cup_1 a_{\rd}) +((a_{\rd}+d\hat{v})\cup_1 a_{\bl})\hat{v} + (\hat v\cup(a_{\bl}+d\hat v))\cup_1 a_{\rd} + ((a_{\rd}+d\hat v)\cup \hat{v})\cup_1 a_{\bl}  \right) \\  
    &= \exp\left(\pi i \int_{\text{bulk}} (a_{\rd}a_{\rd}+ a_{\rd}a_{\bl}){\hat v}+{\hat v}(a_{\bl}a_{\bl} +a_{\rd}a_{\bl}) + \hat vd\hat v (a_{\bl}+d\hat{v}) +  d\hat{v}\hat vd\hat v  + d\hat{v}\hat{v} a_{\bl}\right) \exp\left(\pi i \int_{\text{bdry}}{\hat{v}}d{\hat v}+\tilde a_{\rd}\cup_1({\hat{v}}d{\hat v}) \right) \\ 
    &= \exp\left(\pi i \int_{\text{bulk}} (a_{\rd}a_{\rd}+ a_{\rd}a_{\bl}){\hat v}+{\hat v}(a_{\bl}a_{\bl} +a_{\rd}a_{\bl}) \right) \exp\left(\pi i \int_{\text{bdry}} \hat{v}a_{\bl}+ a_{\rd}\cup_1({\hat{v}}d{\hat v}) \right) \\ 
     &= \exp\left(\pi i \int_{\text{bulk}} (a_{\rd}a_{\rd}+ a_{\rd}a_{\bl}){\hat v}+{\hat v}(a_{\bl}a_{\bl} +a_{\rd}a_{\bl}) \right)~.
     \end{split}
     \label{eq:commutation with W with boundary}
\end{align}
By combining \eqref{commutation with V boundary}, \eqref{eq:commutation with W with boundary} we get
\begin{align}
    WV S_{X;v}^{\rd}S_{X;v}^{\bl}V^\dagger W^\dagger = S_{X;v}^{\rd}S_{X;v}^{\bl}~,
\end{align}
therefore $U=WV$ is a logical gate.

Now let us study the logical action on the code space. In the logical Pauli $Z$ basis, each basis state $\overline{\ket{0}}, \overline{\ket{1}}$ is labeled by a configuration of $\Z_2^3$ gauge fields $\{a_{\rd},a_{\gr},a_{\bl}\}$. Crucially,  the holonomy of $\Z_2$ gauge fields along the boundary takes the following nontrivial values on the state $\overline{\ket{1}}$,
\begin{align}
    \left(\prod_{e\in \text{bdry}_{\rd\bl}} Z^{\rd}_e\right)\overline{\ket{1}} = \exp\left(\pi i \int_{\text{bdry}_{\rd\bl}} a_{\rd}\right)\overline{\ket{1}} = \exp\left(\pi i \int_{\text{bdry}_{\rd\bl}} a_{\bl}\right)\overline{\ket{1}} = -\overline{\ket{1}}~.
\end{align}
Thus the string of $\text{T}^\dagger$ in the operator $W$ evaluates by a $\text{T}^\dagger$ gate on the states. Meanwhile, one can check that the bulk $\text{CS}$ has a trivial logical action on the state (see Fig.~\ref{fig:triangleboundary}).
Therefore automorphism symmetry $U$ acts on the logical-$Z$ basis as
\begin{equation}
    U\overline{|m\rangle}= e^{-\frac{\pi i m}{4}}\overline{|m\rangle}=\text{T}^\dag\overline{|m\rangle}~,
\end{equation}
with $m=0,1$. This shows that the automorphism symmetry implements the logical $\text{T}^\dagger$ gate.
    
\end{proof}

\begin{figure}[htb]
    \centering
    \includegraphics[width=0.25\linewidth]{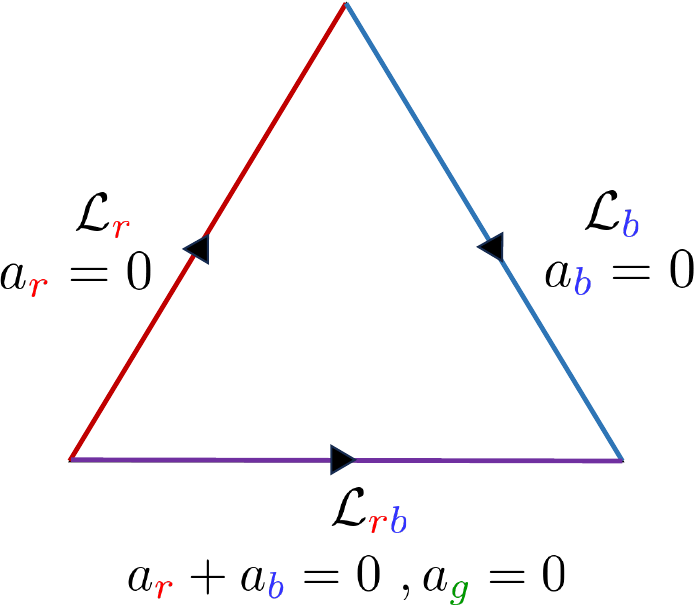}
    \caption{The boundary conditions on gauge fields at the triangle. The logical action of $U=WV$ is evaluated by regarding this triangle as a single 2-simplex, then the bulk cochain $(\tilde a_{\rd}\cup \tilde a_{\bl})/2$ is zero. Meanwhile, the boundary contribution $\tilde a_{\rd}/4$ at
    $\mathcal{L}_{\rd\bl}$ becomes nontrivial, and gives the $\text{T}^\dagger$ action.}
    \label{fig:triangleboundary}
\end{figure}

\section{Details of code switching}

\subsection{\change{Code switching protocol in 2D}}

Here we provide detailed descriptions of the code switching protocol to prepare a T magic state in 2D.
\begin{enumerate}
    \item We start with a logical state $\overline{\ket{+}}$ of the  $\Z_2^{\rd}\times\Z_2^{\bl}$ gauge theory on a triangle with gapped boundaries, which is equivalent to the folded surface code on a triangle and encodes a single qubit.
    These stabilizers are simply obtained by eliminating green qubits from Fig.~\ref{fig:trianglehamiltonian}, and generated by $X$ stabilizers $A_{X;v}^{\rd}, A_{X;v}^{\bl}$ and $Z$ stabilizers $B_{Z;f}^{\rd}, B_{Z;f}^{\bl}$ of folded surface codes. 
    We also initialize the green qubits as $\bigotimes_e \ket{0}^{\gr}_e$. Therefore, the whole stabilizer group is generated by
    \begin{align}
        \left\{A_{X;v}^{\rd}, B_{Z;f}^{\rd},A_{X;v}^{\bl}, B_{Z;f}^{\bl}, Z^{\gr}_e\right\}
        \label{eq:initial stabilizers}
    \end{align}
    
    \item 
    At this point, since $B^{\rd}_{Z;f}=S^{\rd}_{Z;f},B^{\bl}_{Z;f}=S^{\bl}_{Z;f}$ and $Z^{\gr}_e=1$ on all edges, all $Z$ stabilizers of $\mathbb{D}_4$ code $\mathcal{S}_{Z}^{\red{r}},\mathcal{S}_{Z}^{\green{g}},\mathcal{S}_{Z}^{\blue{b}}$ are contained in the stabilizer group. Also, since
    \begin{align}
        S^{\rd}_{X;v} = A^{\rd}_{X;v}\times \prod_{e',e'':\int\tilde v\cup \tilde e'\cup \tilde e''\neq 0}CZ^{\gr,\bl}_{e',e''}~, \quad S^{\bl}_{X;v} = A^{\bl}_{X;v}\times \prod_{e',e'':\int \tilde e'\cup \tilde e''\cup \tilde v\neq 0} CZ^{\rd,\gr}_{e',e''}~,
    \end{align}
    and $CZ^{\gr,\bl}=CZ^{\rd,\bl}=1$ when $Z^{\gr}_e=1$, the $X$ stabilizers $\mathcal{S}_{X}^{\red{r}},\mathcal{S}_{X}^{\blue{b}}$ are also contained in the stabilizer group. The generators of the stabilizer group \eqref{eq:initial stabilizers} are hence rewritten as
     \begin{align}
        \left\{S_{X;v}^{\rd}, S_{Z;f}^{\rd},S_{X;v}^{\bl}, S_{Z;f}^{\bl}, Z^{\gr}_e, S_{Z,f}^{\gr}\right\}~.
    \end{align}    
    We then perform $O(d)$ rounds of syndrome measurements ($d$ is the code distance) of the Clifford stabilizers $S_{X;v}^{\gr}$.
    Since $S^{\gr}_{X;v}$ commutes with the other $X$ stabilizers  $\mathcal{S}_{X}^{\red{r}},\mathcal{S}_{X}^{\blue{b}}$ within the $Z$ stabilizer subspace $\mathcal{S}_{Z}^{\red{r}}=\mathcal{S}_{Z}^{\green{g}}=\mathcal{S}_{Z}^{\blue{b}}=1$, $\mathcal{S}_{X}^{\red{r}},\mathcal{S}_{X}^{\blue{b}}$
    remain in the stabilizer group after this $S^{\gr}_{X;v}$ measurement.
    Therefore, the stabilizer group after the $S^{\gr}_{X;v}$ measurement
    is given by
     \begin{align}
        \left\{S_{X;v}^{\rd}, S_{Z;f}^{\rd},S_{X;v}^{\bl}, S_{Z;f}^{\bl}, (-1)^{m_v}S^{\gr}_{X;v}, S_{Z,f}^{\gr}\right\}~,
    \end{align}  
    where $(-1)^{m_v}=\pm 1$ is the random measurement outcome. This is the stabilizer group of the $\mathbb{D}_4$ Clifford stabilizer code.
    At the same time, we also perform $O(d)$ rounds of measurement for the other Clifford and Pauli $Z$ stabilizers in the entire code; this prepares a state of the $\mathbb{D}_4$ Clifford stabilizer code \cite{Tantivasadakarn2024LRE, verresen2022efficient, Tantivasadakarn2023shortest}.
    
    Meanwhile, we apply the ``just-in-time" decoder from Ref.~\cite{Davydova:2025ylx}.  The non-abelian fluxes corresponding to the detected $Z$-stabilizer syndromes (with eigenvalue $(-1)$) are paired up  in real time  to form closed flux worldline whenever the observation time of the syndromes is comparable to the separation of the syndromes. After the $O(d)$ rounds of syndrome measurements, we use the matching or RG decoder \cite{ducloscianci2010renormalizationgroupdecodingalgorithm} to clean up the remaining Abelian charge syndromes and the corresponding $Z$ errors.  The above procedure is also called a gauging procedure, where one effectively switches the code space into the $\mathbb{D}_4$ code.

    We note that the logical Pauli $\overline{Z}$ operator of the initial folded surface code for $\Z_2^{\rd}\times \Z_2^{\bl}$ commutes with the above $S_{X;v}^{\gr}$ measurement, and the same operator defines the logical $\overline{Z}$ operator of the Clifford stabilizer code after switching. See Fig.~\ref{fig:logicalZ} for an illustration of the logical $\overline{Z}$ operator of the Clifford stabilizer code.
    Therefore the logical state $\ket{\overline{0}}, \ket{\overline{1}}$ of the folded surface code are simply mapped into $\ket{\overline{0}}, \ket{\overline{1}}$ in the Clifford stabilizer code under the switching; the measurement acts by identity on the code space.
    
    \item We act the transversal logical T$^\dagger$ gate $U$ of the Clifford stabilizer code.
    \item We then measure the Pauli $Z^{\gr}_e$ operators on each edges that leads to stabilizers $\{(-1)^{m'_e}Z^{\gr}_e\}$ with random measurement outcome $(-1)^{m'_e}=\pm 1$ on each edge. The edges with $(-1)^{m'_e}=-1$ outcome form a homologically trivial closed loop, therefore can be corrected by acting $S_X^{\gr}$. By acting $S_X^{\gr}$ to flip the strings of $\ket{1}^{\gr}_e$, then the stabilizer becomes 
    \begin{align}
        \left\{A_{X;v}^{\rd}, B_{Z;f}^{\rd},A_{X;v}^{\bl}, B_{Z;f}^{\bl}, Z_{e}^{\gr}\right\}~.
    \end{align} 
    This switches the code back to the $\Z_2^{\rd}\times\Z_2^{\bl}$ surface code followed by $O(d)$ rounds of error corrections.
    The final state is given by $\overline{\text{T}}^\dagger\overline{\ket{+}}$.
\end{enumerate}

\subsection{\change{Code switching from gapped domain walls}}

The above code switching protocol is understood as a composition of gapped domain walls separating the twisted $\Z_2^3$ gauge theory (Clifford stabilizer code) and the untwisted $\Z_2\times\Z_2$ gauge theory (folded surface code). This is illustrated in Fig.~\ref{fig:switching} of the main text.
Here, we provide the perspective of the code and its switching through the anyon condensation that describes gapped boundaries and domain walls.

First, let us describe the gapped boundaries of the Clifford stabilizer code. The gapped boundaries of topological order are characterized essentially by a set of topological excitations that can condense on the boundary, called ``Lagrangian algebra''  \cite{Kapustin:2010hk,davydov2011wittgroupnondegeneratebraided,Levin_2013,Barkeshli_2013}.

The Lagrangian algebras for the three gapped boundaries of the Clifford stabilizer code are described as follows:
\begin{align}
\begin{split}
\label{eq:Lagrangian}
    \mathcal{L}_{\rd} &= 1\oplus e_{\rd}\oplus m_{\bl} \oplus m_{\gr} \oplus m_{\gr\bl}~,\\
    \mathcal{L}_{\bl} &= 1\oplus e_{\bl}\oplus m_{\gr} \oplus m_{\rd} \oplus m_{\rd\gr}~,\\
\mathcal{L}_{\rd\bl} &= 1\oplus e_{\gr}\oplus e_{\rd\bl} \oplus e_{\rd\gr\bl} \oplus 2m_{\rd\bl}~.\\
    \end{split}
\end{align}
The components of each Lagrangian algebra indicates the excitations that are condensed on each boundary. For example, ${\cal L}_{\rd}$ corresponds to condensing $e_{\rd},m_{\bl},m_{\gr},m_{\gr\bl}$ on the boundary.

The gauging measurement switching between the Clifford stabilizer code and folded surface code is understood as a gapped domain wall (represented as (DW) in the main text Fig.~\ref{fig:switching}) which condenses the electric particle $e_{\gr}$ of the twisted $\Z_2^3$ gauge theory.

Then, the transversal T$^\dagger$ gate $U$ of the Clifford stabilizer code corresponds to the anyon permutation automorphism introduced in \eqref{eq:anyonpermutationapp},
\begin{align}
    &m_{\rd}\leftrightarrow m_{\rd \gr},\; m_{\gr}\leftrightarrow m_{\gr}, \; m_{\bl}\leftrightarrow m_{\gr\bl}, \; m_{\rd\bl}\leftrightarrow m_{\rd\bl}~,\cr
    &e_{\rd}\leftrightarrow e_{\rd},\quad e_{\gr}\leftrightarrow e_{\rd\gr\bl},\quad  e_{\bl}\leftrightarrow e_{\bl}~.
\end{align}
The code switching on a folded surface code acts by a composite operator
\begin{align}
    (\text{DW})\times U \times (\text{DW})^\dagger~,
\end{align}
which generates the logical $\text{T}^\dagger$ gate.

As described in the main text, the T gate in the folded surface code through code switching is also constructed in Ref.~\cite{Davydova:2025ylx}. Here we compare two constructions using the anyon condensations and automorphisms of anyon theories.

In their construction, the gapped boundaries of the Clifford stabilizer code on a triangle is given by the following Lagrangian algebras
\begin{align}
\begin{split}
\label{eq:Lagrangian tilde}
    \tilde{\mathcal{L}}_{\rd} &= 1\oplus e_{\rd}\oplus m_{\bl} \oplus m_{\gr} \oplus m_{\gr\bl}~,\\
    \tilde{\mathcal{L}}_{\bl} &= 1\oplus e_{\rd\gr\bl}\oplus m_{\gr\bl} \oplus m_{\rd\bl} \oplus m_{\rd\gr}~,\\
\tilde{\mathcal{L}}_{\rd\bl} &= 1\oplus e_{\gr}\oplus e_{\gr\bl} \oplus e_{\bl} \oplus 2m_{\rd}~.\\
    \end{split}
\end{align}
These boundaries correspond to the side boundaries in (63) of \cite{Davydova:2025ylx} denoted as $\langle \gr,\bl\rangle,\langle \rd \gr,\rd \bl\rangle,\langle \rd\rangle$, respectively.

In Ref.~\cite{Davydova:2025ylx} the logical T gate is then obtained by a composite of distinct gapped domain walls (DW) and (DW') separating $\Z_2\times\Z_2$ and $\mathbb{D}_4$ gauge theory,
\begin{align}
    (\text{DW}') \times (\text{DW})^\dagger~,
\end{align}
where (DW) again corresponds to condensing the electric particle $e_{\gr}$. Meanwhile, the other domain wall (DW') is obtained by fusing the invertible symmetry operator $\tilde{U}$ with (DW), $\text{DW'} = \text{DW}\times \tilde U$, where $\tilde U$ transforms anyons as
\begin{align}
    &m_{\gr}\leftrightarrow m_{\bl},\; m_{\rd}\leftrightarrow m_{\rd}~, \; e_{\gr}\leftrightarrow e_{\bl},\; e_{\rd}\leftrightarrow e_{\rd}~.
    \label{eq:anyonpermutation tilde}
\end{align}

One can then see that the above automorphism $\tilde U$, together with the gapped boundaries $\tilde{\mathcal{L}}_{\rd},\tilde{\mathcal{L}}_{\bl},\tilde{\mathcal{L}}_{\rd\bl}$ are transformed into the ones without tilde by an automorphism permuting anyons as
    \begin{align}
    &m_{\rd}\leftrightarrow m_{\rd \bl},\; m_{\gr}\leftrightarrow m_{\gr\bl}, \; m_{\bl}\leftrightarrow m_{\bl}, \; m_{\rd\gr}\leftrightarrow m_{\rd\gr}~,\cr
    &e_{\rd}\leftrightarrow e_{\rd},\quad e_{\gr}\leftrightarrow e_{\gr},\quad  e_{\bl}\leftrightarrow e_{\rd\gr\bl}~.
\end{align}

Therefore, by rewriting the domain walls $\text{DW'}\times (\text{DW})^\dagger$ as  $\text{DW}\times \tilde U\times  (\text{DW})^\dagger$, the action of $\tilde U$ on the $\mathbb{D}_4$ gauge theory is identified as a logical T gate $U$ by an automorphism \eqref{eq:anyonpermutationapp}.

\section{Logical CS gate in 2D}\label{sec:CS}

The model for logical CS gate has 6 copies of the Clifford stabilizer model. The gauge groups of the first two copies are
denoted by $\Z_2^{\rd}\times \Z_2^{\gr}\times \Z_2^{\bl}$ and $\Z_2'^{\rd}\times \Z_2'^{\gr}\times \Z_2'^{\bl}$, while the gauge groups of the rest of copies are
$(\Z_2^{\rd})^{(i)}\times (\Z_2^{\gr})^{(i)}\times (\Z_2^{\bl})^{(i)}$ for $i=1,2,3,4$.
The region is still a triangle region, with the following boundary conditions as described by breaking the total gauge group to various subgroups:
\begin{itemize}
    \item[(1)] The left boundary ${\cal L}_{(1)}$ breaks it to $\mathbb{Z}_2^{\gr}\times \mathbb{Z}_2^{\bl}\times\mathbb{Z}_2'^{\gr}\times\mathbb{Z}_2'^{\bl}\times \prod_i \left((\mathbb{Z}_2^{\gr})^{(i)}\times (\mathbb{Z}_2^{\bl})^{(i)}\right)$.
    
    \item[(2)] The right boundary ${\cal L}_{(2)}$ breaks it to $\mathbb{Z}_2^{\rd}\times \mathbb{Z}_2^{\gr}\times\mathbb{Z}_2'^{\rd}\times\mathbb{Z}_2'^{\gr}\times \prod_i \left((\mathbb{Z}_2^{\rd})^{(i)}\times (\mathbb{Z}_2^{\gr})^{(i)}\right)$, and
    
    \item[(3)] The bottom boundary ${\cal L}_{(3)}$ breaks it to
\begin{align}
 &\text{diag}(\mathbb{Z}_2^{\rd},\mathbb{Z}_2'^{\bl})\times 
\text{diag}(\mathbb{Z}_2^{\bl},\mathbb{Z}_2'^{\rd})\times\text{diag}(\mathbb{Z}_2^{\gr},\mathbb{Z}_2'^{\gr})\cr
&\times\text{diag}((\mathbb{Z}_2^{\rd})^{(1)},(\mathbb{Z}_2^{\bl})^{(1)},\mathbb{Z}_2^{\rd})
\times
\text{diag}((\mathbb{Z}_2^{\gr})^{(1)},\mathbb{Z}_2^{\bl})\cr 
&\times   \text{diag}((\mathbb{Z}_2^{\rd})^{(2)},(\mathbb{Z}_2^{\gr})^{(2)},\mathbb{Z}_2^{\rd})\times
 \text{diag}((\mathbb{Z}_2^{\bl})^{(1)},\mathbb{Z}_2^{\bl})\cr
 &\times   \text{diag}((\mathbb{Z}_2^{\gr})^{(3)},(\mathbb{Z}_2^{\bl})^{(3)},\mathbb{Z}_2^{\rd})\times
 \text{diag}((\mathbb{Z}_2^{\rd})^{(3)},\mathbb{Z}_2^{\bl})\cr
 &\times   \text{diag}((\mathbb{Z}_2^{\rd})^{(4)},(\mathbb{Z}_2^{\bl})^{(4)},\mathbb{Z}_2^{\rd})\times
 \text{diag}((\mathbb{Z}_2^{\gr})^{(4)},\mathbb{Z}_2^{\bl})~.
\end{align}

\end{itemize}
In terms of the gauge fields for the 6 copies of twisted $\mathbb{Z}_2^3$ gauge theories, these boundary conditions correspond to
(1) $a_{\rd}=0,a_{\rd}'=0,a_{\rd}^{(i)}=0$, (2) $a_{\bl}=0$, $a_{\bl}'=0$, $a_{\bl}^{(i)}=0$, and the boundary (3) is
\begin{align}\label{eqn:bcCS}
&a_{\rd}=a_{\bl}',\;a_{\rd}=a_{\bl}',\;a_{\rd}=a_{\bl}',\;
a_{\bl}=a_{\rd}',\; a_{\gr}=a_{\gr}'\cr
&a_{\rd}^{(1)}=a_{\rd},\; a_{\gr}^{(1)}=a_{\bl}, \; a_{\bl}^{(1)}=a_{\rd}\cr
&a_{\rd}^{(2)}=a_{\rd},\; a_{\gr}^{(2)}=a_{\rd},\; a_{\bl}^{(2)}=a_{\bl}\cr
&a_{\rd}^{(3)}=a_{\bl},\; a_{\gr}^{(3)}=a_{\rd},\; a_{\bl}^{(3)}=a_{\rd}\cr
&a_{\rd}^{(4)}=a_{\rd},\; a_{\gr}^{(4)}=a_{\bl},\; a_{\bl}^{(4)}=a_{\rd}~,    
\end{align}

First we will show that is a well-defined boundary. The topological action $\omega=\pi a_{\rd}\cup a_{\gr}\cup a_{\bl}+\pi a_{\rd}'\cup a_{\gr}'\cup a_{\bl}'$ manifestly becomes trivial for the boundaries (1) and (2). For boundary (3), the action restricted to the unbroken subgroup on the boundary becomes
\begin{align}
    &\pi a_{\rd}\cup a_{\gr}\cup a_{\bl}+\pi a_{\bl}\cup a_{\gr}\cup a_{\rd}\cr 
    &\quad +\pi a_{\rd}\cup a_{\bl}\cup a_{\rd}+\pi a_{\rd}\cup a_{\rd}\cup a_{\bl}+\pi a_{\bl}\cup a_{\rd}\cup a_{\rd}+\pi a_{\rd}\cup a_{\bl}\cup a_{\rd}~.
\end{align}
This is trivial in the cohomology for $\mathbb{Z}_2$ cocycles $a_{\rd},a_{\gr},a_{\bl}$, and thus boundary (3) is also a valid gapped boundary.

Next, let us look at the possible non-contractible Pauli $Z$ strings. 
We will focus on the Pauli $Z$ strings associated with the first two copies of the Clifford stabilizer model.
These strings are generated by the strings shown in Fig.~\ref{fig:cslogicalqubit}. There are two independent Pauli $Z$ strings, and their eigenvalues label the two logical qubits.
We can also pull each junction of the Pauli string in Fig.~\ref{fig:cslogicalqubit} to either the lower left or lower right corner: for the left figure, the logical operator becomes the $Z^{\rd}=Z'^{\bl}$ string on the bottom, while for the right figure it becomes the $Z^{\bl}=Z'^{\rd}$ string on the bottom boundary. In other words, the two logical qubits can also be labeled by the holonomy of the $\mathbb{Z}_2$ gauge fields $a_{\rd}=a_{\bl}'$ and $a_{\bl}=a_{\rd}'$ on the bottom boundary.

\begin{figure}[htb]
    \centering
    \includegraphics[width=0.5\linewidth]{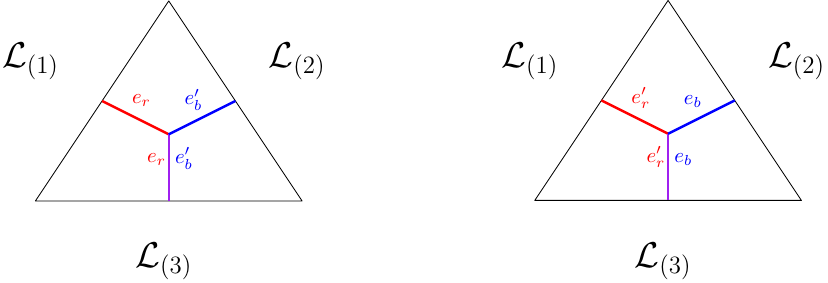}
    \caption{Boundary conditions for CS logical gate from automorphism symmetry. The two logical qubits are labeled by the eigenvalues of two independent Pauli $Z$ operators (left and right), where ${e_{\rd}},{e'_{\rd}}$ strings are $\prod Z^{\rd},\prod Z'^{\rd}$ respectively for the two layers of qubits, and ${e_{\bl}},{e'_{\bl}}$ strings are $\prod Z^{\bl},\prod Z'^{\bl}$ respectively for the two layers of qubits.}
    \label{fig:cslogicalqubit}
\end{figure}

\medskip

Next, let us study the automorphism symmetry in the presence of these boundaries.

\begin{theorem}
    The automorphism symmetry
    \begin{equation}
        \tilde U:=
        U\otimes U'\otimes  
        U(\mathrm{SWAP}_{\rd,\gr})^{(1)}
        \otimes
        U(\mathrm{SWAP}_{\gr,\bl})^{(2)}
        \otimes
        U(\mathrm{SWAP}_{\rd,\gr})^{(3)}
        \otimes
        U(\mathrm{SWAP}_{\gr,\bl})^{(4)}
    \end{equation}
    in the 6-layered Clifford stabilizer model with the boundary condition implements logical CS$^\dagger$ gate. In the above automorphism symmetry, $U,U'$ correspond to the same automorphism in the T gate construction for the first two copies, while the other parts in the automorphism are SWAP symmetries for the other copies.
\end{theorem}
The boundary contribution for this automorphism symmetry is derived as in \cite{Hsin2024:classifying} using the following identity
\begin{equation}\label{eqn:csauto1}
    i^{\int d(\tilde a_{\rd}\cup_1 \tilde a_{\bl})}=
    i^{\int_{\mathrm{bdy (3)}}\tilde a_{\rd}\cup_1 \tilde a_{\bl}}
    (-1)^{\int a_{\rd}^2 \cup_1 a_{\bl} + a_{\bl}^2 \cup_1 a_{\rd}}~,
\end{equation}
where the last sign term in the bulk can be rewritten using the Hirsch identity as
\begin{equation}
    (-1)^{
(a_{\rd} \cup_1 a_{\bl})\cup a_{\rd} + a_{\rd} \cup (a_{\rd} \cup_1 a_{\bl}) 
+ (a_{\bl} \cup_1 a_{\rd}) \cup a_{\rd}+ a_{\rd} \cup (a_{\bl} \cup_1 a_{\rd})
}~.
\end{equation}
The part of the automorphism symmetry $\tilde U$ on the first two copies $U\otimes U'$ gives the gauged SPT defect (\ref{eqn:csauto1}) in addition to the automorphism action. Together with the SWAP automorphisms for the other copies in $\tilde U$, the gauged SPT defects from the automorphism symmetry becomes trivialized by the boundary condition. To see this, we note that the gauged SPT for the SWAP automorphisms can be derived using the following identities: \cite{Hsin:2024pdi}
\begin{align}
    &\pi a_{\gr}^{(i)}\cup a_{\rd}^{(i)}\cup a_{\bl}^{(i)}=
    \pi a_{\rd}^{(i)}\cup a_{\gr}^{(i)}\cup a_{\bl}^{(i)}+\pi d(a_{\rd}^{(i)}\cup_1 a_{\gr}^{(i)})\cup a_{\bl}^{(i)})~,\cr 
    &\pi a_{\rd}^{(i)}\cup a_{\bl}^{(i)}\cup a_{\gr}^{(i)}=
    \pi a_{\rd}^{(i)}\cup a_{\gr}^{(i)}\cup a_{\bl}^{(i)}+\pi a_{\rd}^{(i)}\cup d(a_{\gr}^{(i)}\cup_1 a_{\bl}^{(i)})~.
\end{align}
Thus $\tilde U$ on the other 4 copies has the gauged SPT defect
\begin{equation}
    (-1)^{(a_{\rd}^{(1)}\cup_1 a_{\gr}^{(1)})\cup a_{\bl}^{(1)}}
    (-1)^{a_{\rd}^{(2)}\cup (a_{\gr}^{(2)}\cup_1 a_{\bl}^{(2)})}
    (-1)^{(a_{\rd}^{(3)}\cup_1 a_{\gr}^{(3)})\cup a_{\bl}^{(3)}}
    (-1)^{a_{\rd}^{(4)}\cup (a_{\gr}^{(4)}\cup_1 a_{\bl}^{(4)})}~.
\end{equation}
With the boundary conditions (\ref{eqn:bcCS}) they cancel the sign factors in (\ref{eqn:csauto1}).
Thus the automorphism symmetry $\tilde U$ has the boundary contribution 
\begin{align}
i^{\int_{\mathrm{bdy (3)}}\tilde a_{\rd}\cup_1 \tilde a_{\bl}}~.
\end{align}

For holonomy of $a_{\rd}=a_{\bl}'$ and $a_{\bl}=a_{\rd}'$ on the boundary (3) given by $n_{\rd},n_{\bl}=0,1$, this gives $i^{-n_{\rd}n_{\bl}}$. Meanwhile, one can check that the bulk CS gates evaluate trivially on the nontrivial $\Z_2$ gauge fields.
Thus the finite depth circuit for the automorphism symmetry generates the logical CS$^\dagger$ gate.

\section{$\sqrt{\text{T}}$ gate and Emergent automorphism symmetry in 3D}
\label{app:sqT}

Here we present a non-Clifford stabilizer model for the (3+1)D $\Z_2^4$ gauge theory. The gauge theory consists of four $\Z_2$ gauge fields $a_{\rd},a_{\gr},a_{\bl},a_{\yl}$, with the Dijkgraaf-Witten twist $(-1)^{\int a_{\rd}\cup a_{\yl}\cup a_{\bl}\cup a_{\gr}}$.
The corresponding stabilizer code is supported at a triangulated 3d manifold, with four qubits with colors $\rd,\gr,\bl,\yl$ on each edge. Their Pauli $Z$ operators are related to the $\Z_2$ gauge fields by e.g., $Z_{\rd} = (-1)^{a_{\rd}}$.
The stabilizer group is represented by
\begin{align}
    \mathcal{S} = \left\{ \mathcal{S}_{X}^{\red{r}},\mathcal{S}_{X}^{\green{g}},\mathcal{S}_{X}^{\blue{b}},\mathcal{S}_{X}^{\yl}, 
    \mathcal{S}_{Z}^{\red{r}},\mathcal{S}_{Z}^{\green{g}},\mathcal{S}_{Z}^{\blue{b}},\mathcal{S}_{Z}^{\yl}\right\}~,
\end{align}
with each stabilizer corresponds to 
\begin{align}
    \mathcal{S}_{X}^{\red{r}} = \left\{S_{X;v}^{\red{r}}\right\}~, \quad S_{X;v}^{\red{r}} = \left(\prod_{\partial e\supset v} X^{\rd}_e\right) \prod_{e',e'',e''':\int\tilde v\cup \tilde e'\cup \tilde e''\cup \tilde e'''\neq 0}CCZ^{\yl,\bl,\gr}_{e',e'', e'''}~,
\end{align}
\begin{align}
    \mathcal{S}_{X}^{\gr} = \left\{S_{X;v}^{\gr}\right\}~, \quad S_{X;v}^{\gr} = \left(\prod_{\partial e\supset v} X^{\gr}_e\right) \prod_{e',e'',e''': \int\tilde e'\cup \tilde e''\cup \tilde e'''\cup\tilde v\neq 0}
   CCZ^{\rd,\yl,\bl}_{e',e'',e'''}~,
\end{align}
\begin{align}
    \mathcal{S}_{X}^{\bl} = \left\{S_{X;v}^{\bl}\right\}~, \quad S_{X;v}^{\bl} = \left(\prod_{\partial e\supset v} X^{\bl}_e\right)\prod_{e',e'',e''':\int \tilde e'\cup \tilde e''\cup \tilde v\cup \tilde e'''\neq 0} CCZ^{\rd,\yl,\gr}_{e',e'',e'''}~,
\end{align}
\begin{align}
    \mathcal{S}_{X}^{\yl} = \left\{S_{X;v}^{\bl}\right\}~, \quad S_{X;v}^{\yl} = \left(\prod_{\partial e\supset v} X^{\bl}_e\right)\prod_{e',e'',e''':\int \tilde e'\cup \tilde v\cup \tilde e''\cup \tilde e'''\neq 0} CCZ^{\rd,\bl,\gr}_{e',e'',e'''}~,
\end{align}

\begin{align}
\mathcal{S}_{Z}^{\red{r}} = \left\{S_{Z;f}^{\red{r}}\right\}~,\quad 
    S^{\red{r}}_{Z;f} = \prod_{e\in \partial f} Z_e^{\red{r}}~,
\end{align}
\begin{align}
\mathcal{S}_{Z}^{\gr} = \left\{S_{Z;f}^{\gr}\right\}~,\quad 
    S^{\gr}_{Z;f} = \prod_{e\in \partial f} Z_e^{\gr}~,
\end{align}
\begin{align}
\mathcal{S}_{Z}^{\bl} = \left\{S_{Z;f}^{\bl}\right\}~,\quad 
    S^{\bl}_{Z;f} = \prod_{e\in \partial f} Z_e^{\bl}~.
\end{align}
\begin{align}
\mathcal{S}_{Z}^{\yl} = \left\{S_{Z;f}^{\yl}\right\}~,\quad 
    S^{\yl}_{Z;f} = \prod_{e\in \partial f} Z_e^{\yl}~.
\end{align}

Let us consider this 3D code on a tetrahedron with four gapped boundary conditions shown in Fig.~\ref{fig:tetrahedron}. 
Each gapped boundary is characterized by the boundary conditions on the gauge fields,
\begin{align}
    \begin{split}
        \text{1st boundary} & \quad a_{\bl}= 0~,\\
        \text{2nd boundary} & \quad a_{\rd}= 0~,\\
        \text{3rd boundary} & \quad a_{\gr}= 0~,\\
        \text{4th boundary} & \quad a_{\rd} + a_{\gr} + a_{\bl} = a_{\yl}= 0~,\\
    \end{split}
\end{align}
which are realized by the $Z$-stabilizers at the boundary given by
\begin{align}
    \begin{split}
        \text{$Z$-stabilizer at 1st boundary} & \quad Z_e^{\bl}~,\\
        \text{$Z$-stabilizer at 2nd boundary} & \quad Z_e^{\rd}~,\\
        \text{$Z$-stabilizer at 3rd boundary} & \quad Z_e^{\gr}~,\\
        \text{$Z$-stabilizer at 4th boundary} & \quad Z_e^{\rd}Z_e^{\gr}Z_e^{\bl}~,~ Z^{\yl}~.\\
    \end{split}
\end{align}
The $X$ stabilizers at the boundary are obtained by the ones generated by $\left\{ \mathcal{S}_{X}^{\red{r}},\mathcal{S}_{X}^{\green{g}},\mathcal{S}_{X}^{\blue{b}},\mathcal{S}_{X}^{\yl}\right\}$ that commute with the above boundary $Z$ stabilizers, and truncating such commuting $X$ stabilizers at the boundary. That is, the boundary $X$ stabilizers are given by truncations of the following operators:
\begin{align}
    \begin{split}
        \text{$X$-stabilizer at 1st boundary} & \quad S_{X;v}^{\rd}~,S_{X;v}^{\gr}~,~S_{X;v}^{\yl}~,\\
        \text{$X$-stabilizer at 2nd boundary} & \quad S_{X;v}^{\gr}~,S_{X;v}^{\bl}~,~S_{X;v}^{\yl}~,\\
        \text{$X$-stabilizer at 3rd boundary} & \quad S_{X;v}^{\rd}~,S_{X;v}^{\bl}~,~S_{X;v}^{\yl}~,\\
        \text{$X$-stabilizer at 4th boundary} & \quad S_{X;v}^{\rd}S_{X;v}^{\gr}~,~S_{X;v}^{\gr}S_{X;v}^{\bl}~.\\
    \end{split}
\end{align}
\change{Then the code stores a single logical qubit. To see this, we note that each logical state corresponds to a configuration of $\Z_2^4$ gauge fields up to gauge transformations. With the boundary conditions described above, there is a single nontrivial configuration of $\Z_2^4$ gauge fields on a tetrahedron. This gauge field configuration is characterized by the holonomy of gauge fields $\int a$ along the edges of a tetrahedron, see Fig.~\ref{fig:tetrahedron} for an illustration. This shows that the code space is equivalent to a single logical qubit; $\ket{\overline{0}}$ for trivial gauge fields, and $\ket{\overline{1}}$ for the nontrivial one.}

\change{Analogous to the 2D case, the 3D code has a logical $Z$ operator represented by a junction of Pauli $Z^{\rd}, Z^{\gr}, Z^{\bl}$ string operators meeting at a single point inside a tetrahedron. See Fig.~\ref{fig:3Dlogicals} for an illustration.
This logical $Z$ operator indeed has a $(-1)$ eigenvalue on the logical state $\ket{\overline{1}}$, and hence generates the logical $Z$ operator.
Another way to see this $(-1)$ eigenvalue is to notice another topological operator given by a network of magnetic surface operators, as shown in Fig.~\ref{fig:3Dlogicals}. This operator anti-commutes with the above string operator due to the Aharanov-Bohm phase, therefore the string operator generates the logical Pauli $Z$ gate.}

The code then hosts a logical $\sqrt{\text{T}}=\text{diag}(1, e^{2\pi i/16})$. This logical gate again has an expression $U=WV$,
with
\begin{align}
   V= \bigotimes_e \text{CNOT}^{(\rd,\yl)}_{e} \text{CNOT}^{(\gr,\yl)}_{e}\text{CNOT}^{(\bl,\yl)}_{e}~,
\end{align}

\begin{align}
    W = \exp\left(\pi i \int_{\text{bulk}} \frac{\tilde a_{\rd}\cup \tilde a_{\bl}\cup \tilde a_{\gr} }{2}+a_{\rd}\cup(a_{\gr}\cup_1 a_{\bl})\cup a_{\gr} - \pi i\int_{\text{bdry}_{4}} \frac{\tilde a_{\rd}\cup \tilde a_{\gr}}{4} + \pi i\int_{\text{hinge}_{1,4}}\frac{\tilde a_{\rd}}{8}\right)~,
    \label{expression of W in 3D}
\end{align}
which are expressed by CCS gates in the 3D bulk, CT$^\dagger$ gates on the 4th boundary, and $\sqrt{\text{T}}$ gates on the 1D hinge between the 1st, 4th boundary.

The transversal CNOT operator $V$ permutes the Pauli operators and gauge fields as
\begin{align}
\begin{split}
    & X_e^{\rd}\leftrightarrow X_e^{\rd}X_e^{\yl}, \quad X_e^{\gr}\leftrightarrow X_e^{\gr}X_e^{\yl}, \quad X_e^{\bl}\leftrightarrow X_e^{\bl}X_e^{\yl}~,\quad X_e^{\yl}\leftrightarrow X_e^{\yl}~,\\
    & Z_e^{\rd} \leftrightarrow Z_e^{\rd}, \quad Z_e^{\gr} \leftrightarrow Z_e^{\gr}~, \quad Z_e^{\bl} \leftrightarrow Z_e^{\bl}~,\quad Z_e^{\yl} \leftrightarrow Z_e^{\rd}Z_e^{\gr}Z_e^{\bl}Z_e^{\yl}~.
\end{split}
\end{align}
The form of the operator $W$ is obtained by the following procedure:
\begin{enumerate}
    \item First, the operator $V$ shifts the Dijkgraaf-Witten twist by
    \begin{align}
    \begin{split}
& a_{\rd}\cup (a_{\rd}+a_{\gr}+a_{\bl}+a_{\yl})\cup a_{\bl}\cup a_{\gr} - a_{\rd}\cup a_{\yl}\cup a_{\bl}\cup a_{\gr} \\
& = d\left(\frac{\tilde a_{\rd}\cup \tilde a_{\bl}\cup \tilde a_{\gr} }{2}+a_{\rd}\cup(a_{\gr}\cup_1 a_{\bl})\cup a_{\gr}\right)~.
\end{split}
    \end{align}
Therefore, according to the discussions in Sec.~\ref{app:automorphism path integral}, the bulk integral in $W$ gives an expression of the logical operator in the bulk.

\item The boundary contributions are obtained by trivializing the bulk cochain at the boundary, following the prescription in Sec.~\ref{app:hinge}. The bulk 3-cochain
\begin{align}
    \frac{\tilde a_{\rd}\cup \tilde a_{\bl}\cup \tilde a_{\gr} }{2}+a_{\rd}\cup(a_{\gr}\cup_1 a_{\bl})\cup a_{\gr}
\end{align}
vanishes at boundaries except for the 4th boundary, where the 3-cochain trivializes as $\frac{1}{4}d(\tilde a_{\rd}\cup\tilde a_{\gr})$ under the boundary condition $a_{\rd}+a_{\gr}+a_{\bl}=0$ mod 2. This gives the boundary integral in the expression \eqref{expression of W in 3D}.

\item The hinge contributions are obtained by trivializing the boundary cochains at the hinge. At the hinges surrounding the 4th boundary, the boundary 2-cochain $\frac{1}{4}(\tilde a_{\rd}\cup\tilde a_{\gr})$ becomes zero except for the hinge between the 1st and 4th boundary. At this hinge, the gauge fields satisfy the boundary conditions at both 1st and 4th boundaries, hence $a_{\rd}+a_{\gr}=a_{\bl}=0.$ The boundary cochain then trivializes as $\frac{1}{8}d\tilde a_{\rd}$, which gives the hinge contribution in \eqref{expression of W in 3D}.

\end{enumerate}

The logical action of the above operator $U=WV$ is obtained by evaluating the integral in $W$ on a nontrivial $\Z_2$ gauge fields of a tetrahedron shown in Fig.~\ref{fig:tetrahedron}, that corresponds to the logical state $\ket{1}$. By regarding a tetrahedron as a single 3-simplex, one can check that the bulk and boundary cochains become trivial on the tetrahedron. The only nontrivial contribution arises from the hinge, where the operator $\tilde{a}_{\rd}/8$ evaluates as $1/8$. This implies that the operator $U=WV$ generates the $\sqrt{\text{T}}$ gate.

\begin{figure}[htb]
    \centering
    \includegraphics[width=0.65\linewidth]{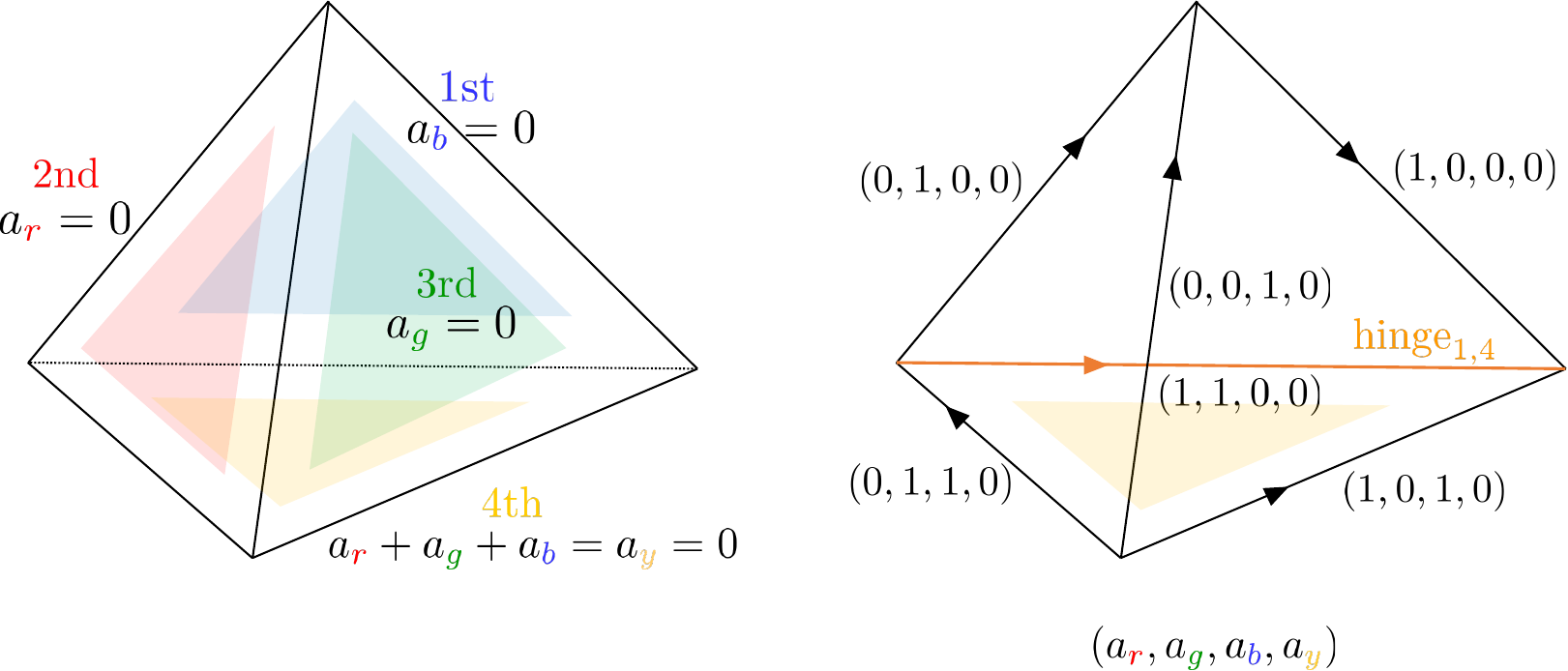}
    \caption{Left: a tetrahedron is bounded by four gapped boundary conditions on each face. Each gapped boundary is characterized by the boundary conditions of $\Z_2$ gauge fields.  Right: The holonomy of the nontrivial $\Z_2$ gauge fields $(a_{\rd}, a_{\gr}, a_{\bl}, a_{\yl})$ on each edge. This nontrivial $\Z_2$ gauge field corresponds to the logical state $\ket{\overline{1}}$. By regarding this tetrahedron as a single simplex and evaluating the integral of $W$, one can see that the integral is nontrivial only at the hinge$_{1,4}$ which gives the $\sqrt{\text{T}}$ gate action.}
    \label{fig:tetrahedron}
\end{figure}

\begin{figure}[htb]
    \centering
    \includegraphics[width=0.65\linewidth]{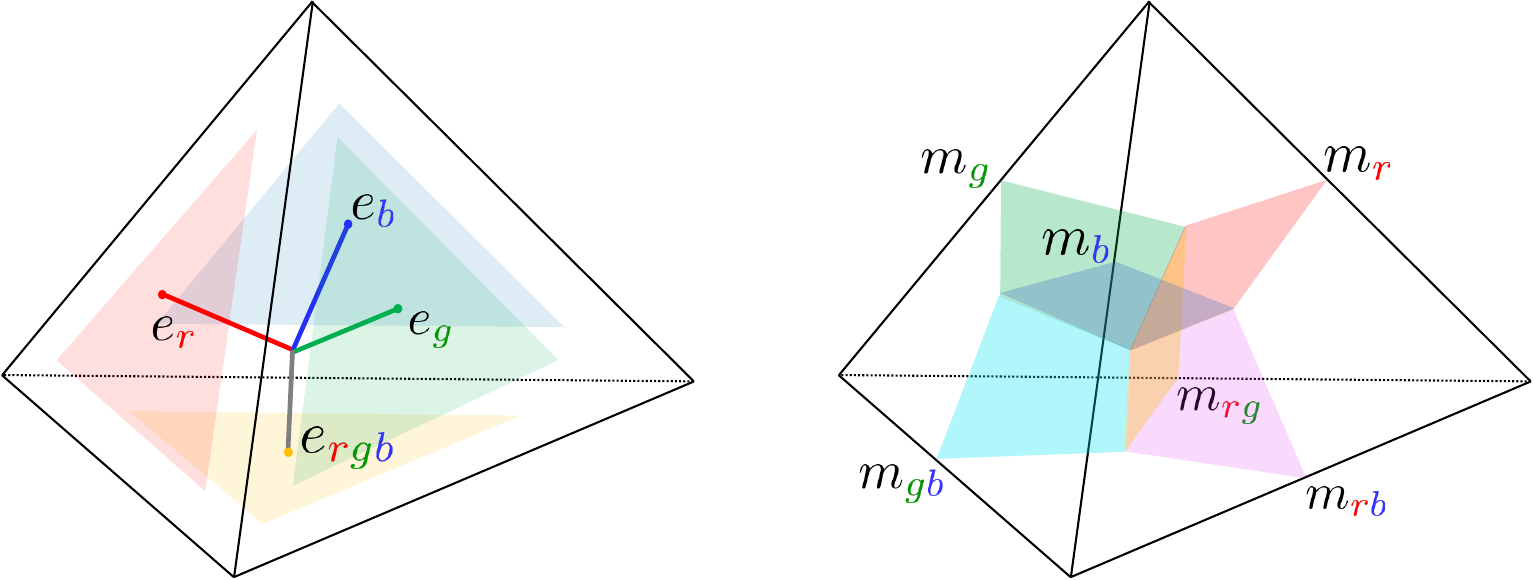}
    \caption{ \change{Left: A logical Pauli $Z$ operator is given by a junction of three strings of electric charges $e_{\rd}, e_{\gr}, e_{\bl}$ stemming from 2nd, 3rd, 1st boundary respectively, fusing inside a tetrahedron into $e_{\rd\gr\bl}$, and terminating at the 4th boundary. Each line operator is a string of $Z^{\rd}, Z^{\gr}, Z^{\bl}, Z^{\rd}Z^{\gr}Z^{\bl}$ along a line. This operator has $(-1)$ eigenvalue on the nontrivial $\Z_2^4$ gauge field depicted in Fig.~\ref{fig:tetrahedron}, hence generates the logical $Z$ gate. Right: There is another non-invertible topological operator formed by a network of magnetic surface operators. It anti-commutes with the logical Pauli $Z$ operator.} }
    \label{fig:3Dlogicals}
\end{figure}

\vfill

\end{document}